\documentclass[oneside]{amsart}
\pdfoutput=1

\usepackage{geometry}
\usepackage[T1]{fontenc}
\usepackage[utf8]{inputenc}
\usepackage{lmodern}
\usepackage[dvipsnames]{xcolor}
\usepackage{tikz,tikz-3dplot}
\usetikzlibrary{positioning,arrows.meta,decorations.pathreplacing,decorations.pathmorphing,decorations.markings}

\newcommand{\CC}{{\mathbb C}}

\newcommand{\FF}{{\mathbb F}}

\newcommand{\QQ}{{\mathbb Q}}
\newcommand{\RR}{{\mathbb R}}

\newcommand{\ZZ}{{\mathbb Z}}
\newcommand{\ee}{\mathrm{e}}
\newcommand{\ii}{\mathrm{i}}

\newcommand{\dd}{\mathrm{d}}


\newcommand{\mm}{{\overline{m}}}

\newcommand{\zz}{{\overline{z}}}

\newcommand{\sV}{\mathcal{V}}

\newcommand{\VGint}{{V_G^\mathrm{int}}}
\newcommand{\VGext}{{V_G^\mathrm{ext}}}

\renewcommand{\Im}{\mathop\mathrm{Im}}
\renewcommand{\Re}{\mathop\mathrm{Re}}

\newcommand{\mybox}{\hspace{-2ex}\qed}
\setlength{\textwidth}{16.0cm}
\setlength{\textheight}{24.0cm}
\setlength{\voffset}{-2.0cm}
\setlength{\hoffset}{-1.5cm}

\theoremstyle{plain}
\newtheorem{thm}{Theorem}
\newtheorem{lem}[thm]{Lemma}

\newtheorem{defn}[thm]{Definition}
\newtheorem{ex}[thm]{Example}

\title{Graphical functions with spin}
\date{}
\author{Oliver Schnetz}
\address{Oliver Schnetz\\
II. Institut f\"ur Theoretische Physik\\
Luruper Chaussee 149\\
22761 Hamburg, Germany}
\email{schnetz@mi.uni-erlangen.de}

\tikzset{
	ad/.style={line width=1pt},
	fun/.style={line width=1pt, postaction={decorate},
		decoration={markings,mark=at position .55 with {\arrow[scale=.5,>=triangle 45]{>}}}},
}

\begin{document}

\begin{abstract}
The theory of graphical functions is generalized from scalar theories to theories with spin, leading to a numerator structure in Feynman integrals. The main part of this
article treats the case of positive integer spin, which is obtained from spin $1/2$ theories after the evaluation of $\gamma$ traces.

As an application (in this article used mainly to prove consistency and efficiency of the method), we calculate primitive Feynman integrals in Yukawa-$\phi^4$ (Gross-Neveu-Yukawa)
theory up to loop order eight.
\end{abstract}
\maketitle

\section{Introduction}
Originally, the theory of graphical functions was developed to analyze the number theory of primitive scalar Feynman integrals (Feynman periods or Feynman residues)
in four-dimensional $\phi^4$ theory \cite{BK,Census,gf}.
The results, using the Maple implementation {\tt HyperlogProcedures} \cite{Shlog}, led to the discovery of a connection between quantum
field theory (QFT) and motivic Galois theory, the coaction conjectures \cite{coaction,Bcoact1,Bcoact2}.

Later, in order to handle full QFTs, the theory of graphical functions was extended to non-integer dimensions \cite{numfunct}. The result of this
extension was the calculation of the $\phi^4$ beta function up to loop order seven in the minimal subtraction scheme \cite{7loops}. The field anomalous dimension was calculated
up to eight loops.

In 2021, a collaboration with Michael Borinsky led to the extension of graphical functions to even dimensions $\geq4$ \cite{gfe}. Application to six-dimensional
$\phi^3$ theory showed that the number content of $\phi^3$ theory is similar (or equal) to the number content of $\phi^4$ theory. This supports
the optimistic hope that the geometry behind the number content of $\phi^4$ theory is universal for all renormalizable QFTs. In Section \ref{sectYukawa} of this article we will see
that an extension of $\phi^4$ theory to fermions via a three-point Yukawa interaction does not seem to enlarge the number content.
It is important to note that there exist strong indications from a tool called the $c_2$-invariant that the number
content of Feynman integrals from non-renormalizable theories with vertex degrees $\geq5$ is vastly more generic than that of $\phi^4$ theory \cite{SchnetzFq,BSmod,Sc2}.
So, from a mathematical point of view, the conjectured sparsity of QFT numbers is quite
mysterious.

In addition, complete calculations in six-dimensional $\phi^3$ theory became possible and led to
record-breaking six-loop results for the beta and gamma functions \cite{5lphi3,Radcortalk,6lphi3}.

After these breakthroughs, it seemed desirable to generalize the theory of graphical functions to theories with positive spin.
Such theories have a numerator structure in the Feynman integrands, which significantly increases the complexity. The main tool in this context is integration by parts (IBP) with the Laporta algorithm (see, e.g., \cite{IBP} and the references therein). The IBP method is very powerful, but scales very badly with the loop order. In fact, in $\phi^4$ theory,
IBP is not helpful, but in $\phi^3$ theory, IBP can be used effectively. In a suitable setup, at high loop orders, the theory of graphical function (which is inherently IBP-free) is a valuable addition to the pool of QFT calculation methods.

In this article, we present the fundamental theory of graphical functions with positive spin. An early version of this article is \cite{ngft}. This article contains corrections, proofs, more results and many more examples. In particular, the new Section \ref{sectYukawa} on Yukawa-$\phi^4$ theory is the first
application of graphical functions outside spin zero.

Algorithms and results will be included in upcoming versions of {\tt HyperlogProcedures} \cite{Shlog}.

\section*{Acknowlegements}
The author is supported by the DFG-grant SCHN 1240/3-1. He thanks Sven-Olaf Moch
for discussions and encouragement. The author also thanks Simon Theil who contributed to an
early version of this work.
\section{Propagators}
Let the dimension
\begin{equation}\label{Dl}
D=2\lambda+2=2n+4-\epsilon>2.
\end{equation}
We do not assume that $D$ is an integer because analytic continuation in the dimension is the most frequently used method to regularize Feynman integrals that are divergent
in integer dimensions. Mathematical proofs in non-integer dimensions are typically hard \cite{BM1,BM2,BM3,BM4}. A reliable workaround is to use integration formulae
in integer dimensions and treat the dimension as a parameter. We follow this strategy and do not prove identities in non-integer dimensions.
Notice, however, that the core identity for graphical functions, the effective Laplace equation (\ref{eqF1fromf}) in the scalar case (\ref{DeltaD}) has been proved
in non-integer dimensions \cite{parG}.

In most calculations the dimension $D$ has to be even for $\epsilon=0$ ($n=0,1,2,\ldots$) because a theory of graphical functions in odd dimensions has not yet been developed.

We define the spin $k$ propagator $Q^\alpha_\nu(x,y)=Q^{\alpha_1,\ldots,\alpha_k}_\nu(x,y)$ in numerator form by

\begin{equation}\label{numform}
\begin{tikzpicture}[scale=1.4]

\begin{scope}[local bounding box=prop1,decoration={
	markings,
	mark=at position 0.6 with {\arrow{stealth}}}]

	\coordinate (XX) at (-1,0);
	\coordinate (YY) at (0,0);
	\coordinate (MM) at (-.5,0);

	\draw[line width=1pt,postaction=decorate](XX) -- (YY);

	\draw[fill = black] (XX) circle (1.5pt);
	\draw[fill = black] (YY) circle (1.5pt);

	\node [above=.1 of MM] {$\alpha_1,\dots,\alpha_k$};
    \node [left=.1 of XX,scale=1.1] {$Q_\nu^\alpha(x,y)=$};	
	\node [below=.1 of MM,scale=.9] {$\nu$};
	\node [below=.1 of XX,scale=1.1] {\textit{x}};
	\node [below=.1 of YY,scale=1.1] {\textit{y}};
\end{scope}

	
\begin{scope}[xshift=70,local bounding box=prop2,decoration={
	markings,
	mark=at position 0.55 with {\arrow{stealth}}}]
	\coordinate (XX) at (-1,0);
	\coordinate (YY) at (0,0);
	\coordinate (MM) at (-.5,0);
	
	\draw[line width=1pt,postaction=decorate](YY) -- (XX);
	\draw[fill = black] (XX) circle (1.5pt);
	\draw[fill = black] (YY) circle (1.5pt);

	\node [above=.1 of MM] {$\alpha_1,\dots,\alpha_k$};
	\node [below=.1 of MM,scale=.9] {$\nu$};
	\node [left=.1 of XX,scale=1.1] {$(-1)^k$};	
	\node [below=.1 of XX,scale=1.1] {\textit{x}};
	\node [below=.1 of YY,scale=1.1] {\textit{y}};
\end{scope}

\begin{scope}[xshift=130,local bounding box=eq]
	\coordinate (XX) at (0,0);
	\node [above=-0.5 of XX,scale=1] { ${\displaystyle\frac{(y^{\alpha_1}-x^{\alpha_1})\cdots(y^{\alpha_k}-x^{\alpha_k})}{\|y-x\|^{2\lambda\nu+k}}}\ .$};
\end{scope}
	
\path(prop1.east) -- (prop1-|prop2.west)  node[midway]{=};
\path(prop2.east) -- (prop2-|eq.west)  node[midway]{=};
\end{tikzpicture}
\end{equation}
We use the multi-index notation $\alpha=\alpha_1,\ldots,\alpha_k$ with $k=|\alpha|$ throughout the article.

Note that we adopt a convention where the orientation is given by end point minus initial point. This is customary in mathematics, while in QFT one more often uses initial point minus end point. Both conventions differ by a minus sign for odd $k$.

We also define a differential form of the propagator $Q_{\nu;\alpha}(x,y)=Q_{\nu;\alpha_1,\ldots,\alpha_k}(x,y)$ where the indices are subscripts,

\begin{equation}
\begin{tikzpicture}[scale=1.4]
\begin{scope}[local bounding box=prop1,decoration={
	markings,
	mark=at position 0.56 with {\arrow{stealth}}}]
	\coordinate (XX) at (-1,0);
	\coordinate (YY) at (0,0);
	\coordinate (MM) at (-.5,0);
	
	\draw[line width =1pt,postaction=decorate](XX) -- (YY);
	\draw[fill = black] (XX) circle (1.5pt);
	\draw[fill = black] (YY) circle (1.5pt);
	
	\node [below=.1 of MM,scale=.9] {$\nu;\alpha$};	
    \node [left=.1 of XX,scale=1.1] {$Q_{\nu;\alpha}(x,y)=$};	
	\node [below=.1 of XX,scale=1.1] {\textit{x}};
	\node [below=.1 of YY,scale=1.1] {\textit{y}};
\end{scope}

\begin{scope}[xshift=65,local bounding box=prop2,decoration={
	markings,
	mark=at position 0.58 with {\arrow{stealth}}}]
	\coordinate (XX) at (-1,0);
	\coordinate (YY) at (0,0);
	\coordinate (MM) at (-.5,0);
	
	\draw[line width=1pt,postaction=decorate](YY) -- (XX);
	\draw[fill = black] (XX) circle (1.5pt);
	\draw[fill = black] (YY) circle (1.5pt);
	
	\node [below=.1 of MM,scale=.9] {$\nu;\alpha$};	
	\node [left=.1 of XX,scale=1.1] {$(-1)^k$};	
	\node [below=.1 of XX,scale=1.1] {\textit{x}};
	\node [below=.1 of YY,scale=1.1] {\textit{y}};
\end{scope}
	
\begin{scope}[xshift=145,local bounding box=prop3]
	\coordinate (XX) at (-1,0);
	\coordinate (YY) at (0,0);
	\coordinate (MM) at (-.5,0);
	
	\draw[ad](YY) -- (XX);
	\draw[fill = black] (XX) circle (1.5pt);
	\draw[fill = black] (YY) circle (1.5pt);
	
	\node [below=.1 of MM,scale=.9] {$\nu-\frac{k}{2\lambda}$};
	\node [left=.1 of XX,scale=1.1] {${\displaystyle \partial_y^{\alpha_1} \cdots \partial_y^{\alpha_k}}$};	
	\node [below=.1 of XX,scale=1.1] {\textit{x}};
	\node [below=.1 of YY,scale=1.1] {\textit{y}};
\end{scope}

\begin{scope}[xshift=195,local bounding box=eq]
	\coordinate (XX) at (0,0);	
	\node [above=-0.55 of XX,scale=1] {${\displaystyle\partial_y^{\alpha_1} \cdots \partial_y^{\alpha_k}\frac{1}{\|y-x\|^{2\lambda\nu-k}}}\ .$};
\end{scope}
	
	\path(prop1.east) -- (prop1-|prop2.west)  node[midway,above=.05]{=};
	\path(prop2.east) -- (prop2-|prop3.west)  node[midway,above=-.07]{=};
	\path(prop3.east) -- (prop3-|eq.west)  node[midway,above=-.03]{=};
\end{tikzpicture}
\end{equation}
The subscript $\nu\in\RR$ refers to the scaling weight $\|x\|^{-2\lambda\nu}$ of $Q^\alpha_\nu(x)$ and of $Q_{\nu;\alpha}(x)$. We say that the weight of the edge from $x$ to $y$ is  $\nu$,
\begin{equation}
N_{Q^\alpha_\nu}=N_{Q_{\nu;\alpha}}=\nu.
\end{equation}
A propagator may have indices which are repeated twice (but not more often). We use Einstein's sum convention that double indices are summed over (from 1 to $D$) without explicitly writing the sum symbol.
We also assume that $\alpha$ is only defined up to permutations, so that $\alpha$ becomes a multi-set with a maximum of two repetitions of each entry. Changing the orientation of a propagator line results in a minus sign for odd $|\alpha|$.

We work in Euclidean signature so that the metric tensor becomes a Kronecker delta,
$$
g^{\alpha_i,\alpha_j}=\delta_{\alpha_i,\alpha_j}.
$$
So, e.g., $g^{\alpha_1,\alpha_1}=\delta_{\alpha_1,\alpha_1}=D$ for every single index $\alpha_1$. For conceptual reasons, we use the symbol $g$ and not $\delta$ for the metric.

The propagator $Q_{\nu;\alpha}$ is connected to the momentum space propagator via a Fourier transformation
\begin{equation}
\int_{\RR^D}\frac{p^{\alpha_1}\cdots p^{\alpha_k}}{\|p\|^{2\mu}}\ee^{\ii x\cdot p}\frac{\dd^D p}{2^D\pi^{D/2}}=
\frac{\partial_{\alpha_1}\cdots\partial_{\alpha_k}}{\ii^k}\int_{\RR^D}\frac{\ee^{\ii x\cdot p}}{\|p\|^{2\mu}}\frac{\dd^D p}{2^D\pi^{D/2}}=
\frac{\Gamma(\lambda+1-\mu)}{4^\mu\ii^k\Gamma(\mu)}
Q_{1+\frac{1+k/2-\mu}\lambda;\alpha}(0,x).
\end{equation}
A chain of two propagators with multi-indices $\alpha$ and $\beta$ can hence be expressed as a single propagator,
\begin{align}\label{convolution}
&\int_{\RR^D}Q_{\nu;\alpha}(x,y)\,Q_{\mu;\beta}(y,z)\frac{\dd^D y}{\pi^{D/2}}=\nonumber\\
&\qquad\frac{\Gamma\Big(1+\frac{|\alpha|}2+\lambda(1-\nu)\Big)\Gamma\Big(1+\frac{|\beta|}2+\lambda(1-\mu)\Big)\Gamma\Big(\lambda(\nu+\mu-1)-1-\frac{|\alpha|+|\beta|}2\Big)}
{\Gamma\Big(2+\frac{|\alpha|+|\beta|}2+\lambda(2-\nu-\mu)\Big)\Gamma\Big(\lambda\nu-\frac{|\alpha|}2\Big)\Gamma\Big(\lambda\mu-\frac{|\beta|}2\Big)}
Q_{\nu+\mu-1-\frac1\lambda;\alpha,\beta}(x,z),
\end{align}
whenever the right-hand side exists. Note that we used differential propagators
to obtain a simple formula for the propagator chain.

In the numerator form (\ref{numform}), double edges can be combined to a single edge
\begin{equation}
Q_\nu^\alpha(x,y)\,Q_\mu^\beta(x,y)=Q_{\nu+\mu}^{\alpha\beta}(x,y)
\end{equation}
for any multi-labels $\alpha$ and $\beta$. Moreover, repeated indices can be dropped in
$Q_\nu^\alpha(x,y)$,
\begin{equation}
Q^{\alpha_1,\alpha_1,\alpha_2,\ldots,\alpha_k}_\nu(x,y)=Q^{\alpha_2,\ldots,\alpha_k}_\nu(x,y).
\end{equation}
In differential form, the reduction of double indices is more subtle. By explicit differentiation we get for $\alpha=\alpha_1,\alpha_1,\alpha_2,\ldots,\alpha_k$,
\begin{equation}\label{Qbb}
Q_{\nu;\alpha_1,\alpha_1,\alpha_2,\ldots,\alpha_k}(x,y)=(|\alpha|-2\lambda\nu)(|\alpha|-2\lambda(\nu-1))Q_{\nu;\alpha_2,\ldots,\alpha_k}(x,y),\quad\text{if }\nu\neq1+\frac{|\alpha|}{2\lambda}.
\end{equation}
For $\nu=1+|\alpha|/2\lambda$ we obtain a Dirac $\delta$ distribution,
\begin{equation}\label{Qdelta}
Q_{1+\frac{|\alpha|}{2\lambda};\alpha_1,\alpha_1,\alpha_2,\ldots,\alpha_k}(x,y)=-\frac4{\Gamma(\lambda)}\partial_y^{\alpha_2}\cdots\partial_y^{\alpha_k}\delta^{D}(x-y).
\end{equation}
Upon integration, such an edge will contract.

The transition from $Q_{\nu;\alpha}$ to $Q_\nu^\alpha$ can be calculated by iterative application of partial differentiation. For any single index $\beta$ and
$\alpha=\alpha_1,\ldots,\alpha_k$ we obtain
\begin{equation}\label{partialQ}
\partial^\beta_y Q^\alpha_\nu(x,y)=\sum_{i=1}^k\delta_{\beta,\alpha_i}Q^{\alpha_1,\ldots,\alpha_{i-1},\alpha_{i+1},\ldots,\alpha_k}_{\nu+\frac1{2\lambda}}(x,y)
-(2\lambda\nu+k)Q^{\beta,\alpha}_{\nu+\frac1{2\lambda}}(x,y).
\end{equation}
The left-hand side of (\ref{partialQ}) may be considered as a mixed numerator differential propagator $Q^\alpha_{\nu+1/2\lambda;\beta}$. Solving (\ref{partialQ})
for the last term on the right-hand side lowers the number of upper indices.
This gives rise to a bootstrap algorithm for the conversion from numerator form to differential form.
We arrange all spin $k-2j$, $j=0,1,\ldots,\lfloor k/2\rfloor$ propagators built from subsets of indices in $\alpha$ into a vectors
$Q^k_\nu$ and $Q_{\nu,k}$ for numerator and differential forms, respectively. Then, the transformation matrix $T^k_\nu$,
\begin{equation}
Q_{\nu,k}=T^k_\nu Q^k_\nu,
\end{equation}
between the vectors $Q^k_\nu$ and $Q_{\nu,k}$ is triangular (for a suitable arrangement of the entries).
The diagonal entries of $T^k_\nu$ are $(-2)^{k-2j}\Gamma(\lambda\nu+k/2-j)/\Gamma(\lambda\nu-k/2+j)$ for $j=0,\ldots,\lfloor k/2 \rfloor$.

\begin{ex}\label{ex1}
For $k=0$, we trivially have $T^0_\nu=(1)$.
For $k=1$ (see $k=0$ in (\ref{partialQ})), we have $Q_{\nu;\alpha_1}(x,y)=(-2\lambda\nu+1)Q^{\alpha_1}_\nu(x,y)$, yielding $T^1_\nu=(-2\lambda\nu+1)$.
For $k=2$ we have
$$
Q_{\nu;\alpha_1,\alpha_2}(x,y)=4\lambda\nu(\lambda\nu-1)Q^{\alpha_1,\alpha_2}_\nu(x,y)-2(\lambda\nu-1)g^{\alpha_1,\alpha_2}Q_\nu(x,y)
$$
leading to the transition matrix
$$
T^2_\nu=\left(\begin{array}{cc}4\lambda\nu(\lambda\nu-1)&
-2(\lambda\nu-1)g^{\alpha_1,\alpha_2}\\0&1\end{array}\right).
$$
\end{ex}

Let $0$ be the origin in $\RR^D$ and let $\hat z_1$ be a fixed, constant unit vector, $\|\hat z_1\|=1$. We obtain $Q^\alpha_\nu(0,\hat z_1)=(-1)^{|\alpha|}Q^\alpha_\nu(\hat z_1,0)=\hat z_1^{\alpha_1}\cdots\hat z_1^{\alpha_k}$.
For any vector $x\in\RR^D$, we use the anti-symmetry of the propagator to define $Q_{\nu;\alpha}(x,0)=(-1)^{|\alpha|}Q_{\nu;\alpha}(0,x)$ and $Q_{\nu;\alpha}(x,\hat z_1)=(-1)^{|\alpha|}Q_{\nu;\alpha}(\hat z_1,x)$.
For the definition of $Q_{\nu;\alpha}(0,\hat z_1)=(-1)^{|\alpha|}Q_{\nu;\alpha}(\hat z_1,0)$ we use the transition matrix $T^k_\nu$. In general, both propagators depend on $\nu$.

\begin{ex}\label{ex2}
From Example \ref{ex1} we obtain $Q_{\nu;\alpha_1}(0,\hat z_1)=(-2\lambda\nu+1)\hat z_1^{\alpha_1}$ and $Q_{\nu;\alpha_1,\alpha_2}(0,\hat z_1)=4\lambda\nu(\lambda\nu-1)\hat z^{\alpha_1}\hat z^{\alpha_2}-2(\lambda\nu-1)g^{{\alpha_1},{\alpha_2}}$.
\end{ex}

\section{Feynman integrals}
Consider an oriented graph $G$ whose edges are labeled by the same indices as the propagators (weight and spin).
Let $\VGext$ be a set of external vertices $z_0$, $z_1$, \ldots, $z_{|\VGext|-1}$ and $\VGint$ be a set of internal vertices $x_1$, \ldots, $x_{|\VGint|}$. We define the
Feynman integral $A_G(z_0,\ldots,z_{|\VGext|-1})$ as the integral
\begin{equation}\label{Adef}
A_G^\alpha(z_0,\ldots,z_{|\VGext|-1})=\int_{\RR^D}\frac{\dd^D x_1}{\pi^{D/2}}\cdots\int_{\RR^D}\frac{\dd^D x_\VGint}{\pi^{D/2}}\prod_{xy\in E_G}Q_{xy}(x,y),
\end{equation}
where the product is over all edges $e=xy\in E_G$. For each edge we have $Q_e=Q^{\alpha_e}_{\nu_e}$ for an edge in numerator form and $Q_e=Q_{\nu_e;\alpha_e}$ for an edge in differential form.
We assume that $G$ is such that the integral exists. Typically, on the right-hand side some (or all) indices are contracted, so that the indices of the propagators
form a larger set than $\alpha$.

The scaling weight of the graph $G$ in $D=2\lambda+2$ dimensions is the superficial degree of divergence
\begin{equation}\label{NGdef}
N_G=\sum_{e\in E_G}\nu_e-\frac{\lambda+1}\lambda\VGint.
\end{equation}
The existence of (\ref{Adef}) only depends on the scaling weights of $G$ and its
subgraphs, so that we can refer to spin zero, Proposition 11 in \cite{gfe}, for convergence.

\section{One-point integrals}\label{sectper}
Consider a graph $G$ with one external vertex $0\in\RR^D$ (also labeled $0$) and zero scaling weight
\begin{equation}\label{NG0}
N_G=0.
\end{equation}
For the definition of the Feynman integral of $G$ we pick any vertex $z_1\neq0$ in $G$ (leading to the graph $G_{z_1}$ with two external vertices) and integrate the two-point
function $A_{G_{z_1}}^\alpha(0,z_1)$ over the $(D-1)$-dimensional unit-sphere parametrized by $\hat z_1=z_1/\|z_1\|$,
\begin{equation}\label{Pdef}
P_G^\alpha=\|z_1\|^D\int_{S^{D-1}}A_{G_{z_1}}^\alpha(0,z_1)\dd^{D-1}\hat z_1,
\end{equation}
where the integral is normalized to $\int_{S^{D-1}}\dd^{D-1} \hat z_1=1$. By rotational symmetry, the integral over $\hat z$ is trivial in the scalar case $\alpha=\emptyset$.
If, moreover, the dimension $D$ is an integer, then $P_G\in\RR$ is a primitive scalar Feynman integral which is a mathematical period in the sense of \cite{KZ}.
The Feynman integral $P_G^\emptyset$ can also be defined in non-integer dimensions $D=2n+4-\epsilon$, $n=0,1,2,\ldots$.
In this setup, every Laurent coefficient in $\epsilon$ is a mathematical period \cite{BWperiods}.

The one-point integral $P_G$ depends neither on the choice of $0$ and $z_1$ nor
on $\|z_1\|$.
\begin{lem}\label{lem:period}
Let $G$ be a graph with an external label $0$ and $N_G=0$.
Then, the integral (\ref{Pdef}) does not depend on the choices of $0$ and $z_1$.

That is, $P_G^\alpha$ is well defined and for every $G'$ which is $G$ with a different external label $0$ ($0$ in $G$ is internal in $G'$ and vice versa) we have
$N_{G'}=0$ and
\begin{equation}\label{PGG'}
P_G^\alpha=P_{G'}^\alpha.
\end{equation}
Moreover, $P_G^\alpha=0$ if the total spin $|\alpha|$ of $G$ is odd.
\end{lem}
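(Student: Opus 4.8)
The whole lemma rests on the single scaling identity implied by $N_G=0$. Writing $\Phi=\prod_{e\in E_G}Q_e$ for the integrand and using (\ref{NGdef}) together with (\ref{NG0}), one has $\sum_e\nu_e=\frac{\lambda+1}{\lambda}|\VGint|$, hence $2\lambda\sum_e\nu_e=D|\VGint|$. Since each propagator is homogeneous of degree $-2\lambda\nu_e$ in its two arguments — immediate from (\ref{numform}), and the same for the differential form — substituting $x_i\mapsto\|z_1\|x_i$ for every internal vertex of $G_{z_1}$ (the external vertex $0$ being fixed by the dilation) shows that $A^\alpha_{G_{z_1}}(0,z_1)$ is homogeneous of degree $Dm-2\lambda\sum_e\nu_e=-D$ in $z_1$, where $m=|\VGint|-1$. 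Therefore $A^\alpha_{G_{z_1}}(0,z_1)=\|z_1\|^{-D}A^\alpha_{G_{z_1}}(0,\hat z_1)$, the prefactor $\|z_1\|^{D}$ in (\ref{Pdef}) cancels, and $P^\alpha_G=\int_{S^{D-1}}A^\alpha_{G_{z_1}}(0,\hat z_1)\,\dd^{D-1}\hat z_1$ is manifestly independent of $\|z_1\|$. The identity $N_{G'}=N_G=0$ is immediate from (\ref{NGdef}), since relabelling which vertex is external changes neither the edge set nor the number of internal vertices.

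Next I treat the independence of the internal vertex $z_1$. The same count shows that $\mu=\Phi\prod_{v\in\VGint}\dd^Dx_v/\pi^{D/2}$ (with $0$ at the origin) is invariant under the common dilation $x_v\mapsto t x_v$, $t>0$. I read (\ref{Pdef}) as the Faddeev--Popov gauge fixing of this one-parameter scale symmetry: the condition $\|x_{z_1}\|=1$ meets every dilation orbit exactly once, and with unit Jacobian since $\log\|x_{z_1}\|$ shifts by $\log t$ along an orbit. Consequently $P^\alpha_G$ equals a $z_1$-independent constant, fixed by the normalisation $\int_{S^{D-1}}\dd^{D-1}\hat z_1=1$, times the integral of $\mu$ over the quotient of internal-vertex configuration space by the dilation group $\RR_{>0}$, which does not reference $z_1$. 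Equivalently, smearing $\|x_{z_1}\|$ against any profile $f$ with $\int_0^\infty f(r)\,\dd r/r=1$ gives, by scale invariance, a value independent of $f$ and of the chosen vertex. I expect the only technical point here to be the measure-theoretic justification of this orbit/slice decomposition, which is covered by the standing assumption that (\ref{Adef}) converges absolutely.

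The crux is the independence of the external vertex, i.e.\ (\ref{PGG'}). Here I exploit the freedom just established: let $v_0$ be the vertex that is external in $G'$. When computing $P_G$ I choose the slice vertex to be $v_0$; when computing $P_{G'}$ I choose it to be $0$. Both choices are legitimate because $v_0$ is internal in $G$ and $0$ is internal in $G'$. Now apply the translation $y_w=x_w-x_{v_0}$ to all vertices. Because every $Q_e$ depends only on the difference of its endpoints, $\Phi$ is translation invariant and the Jacobian is $1$; under the shift $v_0$ moves to the origin while the old origin $0$ moves to $y_0=-x_{v_0}$, which lies on the unit sphere since $\|x_{v_0}\|=1$ was enforced. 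The resulting integral is term by term the defining integral of $P_{G'}$ (the change of variable $x_{v_0}\mapsto -y_0$ is over the symmetric sphere and the difference-based integrand is unaffected). Since the free-index numerators $x^{\alpha_i}_w-x^{\alpha_i}_u$ are themselves translation invariant, the tensor structure is preserved and $P^\alpha_G=P^\alpha_{G'}$. This step — and in particular the need to pre-position the scale slice before translating — is the main obstacle; without the $z_1$-freedom the two integrals would have their scale gauge fixed on incompatible vertices.

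Finally, the vanishing for odd $|\alpha|$ follows from parity. I apply $x_v\mapsto -x_v$ to all integrated internal vertices and $\hat z_1\mapsto-\hat z_1$ (absorbed by the sphere integral), keeping $0$ fixed. The Lebesgue and sphere measures are $O(D)$-invariant and $-I\in O(D)$, so the domain and measure are preserved, whereas each propagator numerator in (\ref{numform}) picks up a factor $(-1)^{|\alpha_e|}$: contracted indices cancel in pairs and the $|\alpha|$ free indices leave an overall factor $(-1)^{|\alpha|}$. Hence $P^\alpha_G=(-1)^{|\alpha|}P^\alpha_G$, so $P^\alpha_G=0$ whenever $|\alpha|$ is odd. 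Equivalently, $P^\alpha_G$ is an $O(D)$-isotropic tensor, hence a polynomial in the metric $g$, which has only even-rank components.
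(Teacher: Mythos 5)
Your proof is correct, and its overall skeleton matches the paper's: homogeneity of degree $-D$ for the $\|z_1\|$-independence, scale invariance to free the choice of the slice vertex, a translation to swap the external label, and the point reflection $x_v\mapsto -x_v$ for the odd-spin vanishing. Two steps are executed genuinely differently, though. First, where the paper proves slice-vertex independence by an explicit change of variables (scale all internal vertices except $x_1$ by $\|x_1\|$, swap $\hat x_1\leftrightarrow\hat z_1$, invert $\|x_1\|$, and reassemble the radial and spherical integrals into a $D$-dimensional integral over $x_1$), you recast (\ref{Pdef}) as a Faddeev--Popov gauge fixing of the dilation symmetry and integrate over the quotient; this is the same scale-invariance content in more conceptual packaging, at the cost of the orbit/slice measure theory you correctly flag (and which, as in the paper, is covered by the standing convergence assumption). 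Second, and more substantively, your proof of (\ref{PGG'}) uses a pure translation $y_w=x_w-x_{v_0}$ followed by an antipodal relabeling of the sphere variable, so the difference-based numerators are literally preserved and you obtain $P_G^\alpha=P_{G'}^\alpha$ with no sign, uniformly in the parity of $|\alpha|$. The paper instead uses the affine reflection $x_i\mapsto z_1-x_i$, which produces the factor $(-1)^{|\alpha|}$ and therefore needs the odd-spin vanishing as a separate input, with (\ref{PGG'}) being trivial ($0=0$) in the odd case. Your route decouples the label-swap identity from the parity argument, a small but genuine logical improvement; what it requires in exchange is exactly the pre-positioning of the scale slice at $v_0$ that you make explicit. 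The parity step itself, and your closing observation that $P_G^\alpha$ is an $O(D)$-isotropic tensor and hence a combination of metric products with no odd-rank component, agree with the paper's treatment and with the expansion (\ref{PGP}).
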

\begin{proof}
It is obvious from the definition (\ref{NGdef}) that $N_{G'}=0$.

We first show the independence of $z_1$. The total scaling weight of $A_{G_{z_1}}^\alpha(0,z_1)$ is $\|z_1\|^{-2\lambda(N_G+(\lambda+1)/\lambda)}$ (because the graph $G_{z_1}$
loses the internal vertex corresponding to $z_1$). We obtain independence from $\|z_1\|$ from (\ref{NG0}) and (\ref{Dl}). Independence from the orientation of $z_1$ is explicit by
integration over the unit sphere.

Next, we prove the independence of the choice of $z_1$ in $G$. We may assume that $z_1=\hat z_1$ is a unit vector. We pick an internal vertex in $G_{\hat z_1}$ and label it $x_1$. Next, we scale all internal vertices $\neq x_1$ by $x_i\mapsto x_i\|x_1\|$. From each edge $e$ we extract the scaling weight $\|x_1\|^{-2\lambda\nu_e}$ of the propagator $Q_e$.
We collect a total weight $\|x_1\|^{-2D}$ because $N_G=0$ and we are not scaling $\hat z_1$ and $x_1$. Moreover, the scaling replaces $x_1$ with the unit vector $\hat x_1=x_1/\|x_1\|$
and $\hat z_1$ with $\hat z_1/\|x_1\|$. We have integrations over $\hat x_1$, $\hat z_1$, and
$\|x_1\|$. We swap the variables $\hat x_1$ and $\hat z_1$ and invert $\|x_1\|$ by $\|x_1\|\mapsto\|x_1\|^{-1}$. The scaling weight $\|x_1\|^{-2D}$ compensates for the change in the integration measure.
We combine the integral over the new $\hat x_1$ with the integral over $\|x_1\|$ to the $D$-dimensional integral over $x_1$ and the claim follows.

A transformation of all internal integration variables $x_i\mapsto -x_i$ maps $z_1$ to $-z_1$. We pick up a sign $(-1)^{\sum_e|\alpha_e|}=(-1)^{|\alpha|}$ in the integral (\ref{Pdef}).
Independence of $z_1$ gives $P_G^\alpha=0$ if $|\alpha|$ is odd.

A transformation of all internal integration variables $x_i\mapsto z_1-x_i$ swaps $0$ and $z_1$ with a sign $(-1)^{|\alpha|}$. If $|\alpha|$ is even, then (\ref{PGG'}) is proved.
If $|\alpha|$ is odd, then (\ref{PGG'}) is trivial.
\end{proof}

Assume that the spin $|\alpha|$ is even (otherwise $P_G^\alpha=0$). Let $\pi=\pi(\alpha)$ be a partition of $\{\alpha_1,\ldots,\alpha_{|\alpha|}\}$ into pairs $\{\pi_{i1},\pi_{i2}\}$, $i=1,\ldots,|\pi|=|\alpha|/2$.
Let $\Pi_0^\alpha$ be the set of all such partitions (later we will define $\Pi_1^\alpha$ and $\Pi_2^\alpha$ for 2- and 3-point functions). Note that the set $\Pi_0^\alpha$ has
$|\Pi_0^\alpha|=(|\alpha|-1)!!=|\alpha|!/(2^{|\alpha|/2}(|\alpha|/2)!)$ elements.

Because $P_G^\alpha$ does not depend on any external vectors, it is a sum over all possibilities to construct
a spin $|\alpha|$ vector from products of $g^{\alpha_i,\alpha_j}$. For $|\alpha|\geq2$ we obtain
\begin{equation}\label{PGP}
P_G^\alpha=\sum_{\pi\in\Pi_0^\alpha}P_{G,\pi} g^{\pi(\alpha)},
\end{equation}
where we used the notation
\begin{equation}
g^\pi=g^{\pi_{11},\pi_{12}}\cdots g^{\pi_{N1},\pi_{N2}}
\end{equation}
with $N=|\alpha|/2$. For $\alpha=\emptyset$ we set $P_{G,\emptyset}=P_G$ and $g^\emptyset=1$.

By contraction over repeated indices, $\pi\mapsto g^\pi$ defines a bilinear form on  $\Pi_0^\alpha$,
\begin{equation}\label{bilin}
\langle \pi_1,\pi_2\rangle_0=g^{\pi_1}g^{\pi_2}\in\ZZ[D],\qquad\text{for }\pi_1,\pi_2\in\Pi_0^\alpha.
\end{equation}

\begin{lem}\label{lemnondeg1}
The bilinear form $\langle\cdot,\cdot\rangle_0$ is nondegenerate.
\end{lem}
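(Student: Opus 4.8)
The plan is to compute every entry of the Gram matrix of $\langle\cdot,\cdot\rangle_0$ explicitly as a monomial in $D$, and then deduce nondegeneracy from a leading-degree argument, without ever evaluating the determinant in closed form. Nondegeneracy means precisely that the Gram determinant is a nonzero element of the integral domain $\ZZ[D]$.

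First I would reinterpret each $\pi\in\Pi_0^\alpha$ as a perfect matching on the $|\alpha|$ index slots $\{\alpha_1,\ldots,\alpha_{|\alpha|}\}$ and read off the pairing value. Since each slot occurs exactly once in $\pi_1$ and once in $\pi_2$, Einstein summation contracts every slot exactly twice, so $g^{\pi_1}g^{\pi_2}$ is a product of Kronecker-$\delta$ factors placed along the edges of the union graph $\pi_1\cup\pi_2$. This union is a disjoint collection of even cycles, each alternating between an edge of $\pi_1$ and an edge of $\pi_2$, and contracting the $\delta$'s around a single closed cycle collapses to $\sum_a\delta_{aa}=D$, giving exactly one factor of $D$ per cycle irrespective of its length. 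Hence
\begin{equation}
\langle\pi_1,\pi_2\rangle_0=D^{\ell(\pi_1,\pi_2)},
\end{equation}
where $\ell(\pi_1,\pi_2)$ denotes the number of cycles (loops) in $\pi_1\cup\pi_2$.

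The key combinatorial fact I would then isolate is the sharp bound $\ell(\pi_1,\pi_2)\le N$, with $N=|\alpha|/2$, and with equality if and only if $\pi_1=\pi_2$. Indeed, a cycle of length $2m$ consumes $2m$ of the $2N$ slots while contributing a single loop, so the loop count is maximal (equal to $N$) exactly when every cycle has length $2$, i.e.\ when $\pi_1$ and $\pi_2$ agree on every pair. Consequently the diagonal entries equal $D^N$, while every off-diagonal entry is $D^\ell$ with $\ell<N$.

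Finally I would run a degree count in the Leibniz expansion of the Gram determinant over permutations $\sigma$ of the finite set $\Pi_0^\alpha$,
\begin{equation}
\det\big(\langle\pi_1,\pi_2\rangle_0\big)_{\pi_1,\pi_2\in\Pi_0^\alpha}=\sum_{\sigma}\mathrm{sgn}(\sigma)\,D^{\sum_\pi\ell(\pi,\sigma(\pi))}.
\end{equation}
By the previous bound, $\sum_\pi\ell(\pi,\sigma(\pi))\le N\,|\Pi_0^\alpha|$, with equality iff $\ell(\pi,\sigma(\pi))=N$ for every $\pi$, iff $\sigma(\pi)=\pi$ for every $\pi$, iff $\sigma=\mathrm{id}$. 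Thus the top power $D^{N|\Pi_0^\alpha|}$ is produced by the identity permutation alone, with coefficient $+1$, so the determinant is a monic polynomial in $D$ of degree $N|\Pi_0^\alpha|$, hence nonzero. This gives nondegeneracy over $\QQ(D)$, and in particular for all but finitely many integer values of $D$. The case $\alpha=\emptyset$ is trivial, the form being the $1\times1$ matrix $(1)$. I expect the only substantive step to be the loop-counting bound $\ell(\pi_1,\pi_2)\le N$ together with its equality case; once that is established, the leading-coefficient argument is routine, and the single thing worth double-checking is that distinct matchings can never spuriously yield $N$ loops, which is exactly that equality statement.
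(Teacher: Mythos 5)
Your proof is correct and takes essentially the same route as the paper: both identify $\langle\pi_1,\pi_2\rangle_0=D^{L(\pi_1,\pi_2)}$ by counting loops in the union of the two matchings, observe that the diagonal entries $D^{|\alpha|/2}$ strictly dominate all off-diagonal entries in degree, and conclude that the Gram determinant has leading coefficient $1$ in $D$ and is therefore nonzero. Your only addition is spelling out the Leibniz-expansion step (identity permutation uniquely attaining the top degree) that the paper leaves implicit.
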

\begin{proof}
We consider $\{1,\ldots,|\alpha|\}$ as the vertices of a graph $\Gamma$ and every pair in $\pi\in\Pi_0^\alpha$ as an edge in $\Gamma$. Then $\Gamma$ has
$|\alpha|/2$ disconnected edges ($\Gamma$ is a 1-regular graph).
For $\pi_1,\pi_2\in\Pi_0^\alpha$, the graph $\pi_1\cup\pi_2$ is a 2-regular graph. Any such graph is a collection of $L(\pi_1,\pi_2)$ loops. By contraction of indices, $\langle\pi_1,\pi_2\rangle_0=D^{L(\pi_1,\pi_2)}$. In particular, $\langle\pi,\pi\rangle_0=D^{|\alpha|/2}$ whereas for any $\pi_1\neq\pi_2$ we obtain that
$\langle\pi_1,\pi_2\rangle_0=D^n$ with $n<|\alpha|/2$. Therefore, $\det\langle\pi_1,\pi_2\rangle_0$ has unit leading coefficient as a polynomial in $D$. In particular, $\det(\langle\pi_1,\pi_2\rangle_0)\neq0$
and $\langle\cdot,\cdot\rangle_0$ is nondegenerate.
\end{proof}
If one (artificially) divides the $|\alpha|$ vertices of the graph $\Gamma$ in the above proof into $|\alpha|/2$ upper vertices and $|\alpha|/2$ lower vertices, one obtains an element in the Brauer algebra \cite{cell}.
The scalar product (\ref{bilin}) is the trace in the Brauer algebra. For generic parameter $D$, Brauer algebras have been shown to be semi-simple. Brauer algebras
belong to the class of cellular algebras.

By Lemma \ref{lemnondeg1}, every partition $\pi\in\Pi_0^\alpha$ has a dual $\hat\pi=\hat\pi(\alpha)$ in the vector space of formal sums of partitions with coefficients in the field of rational functions in $D$,
\begin{equation}
\langle\pi_i,\hat\pi_j\rangle_0=\delta_{i,j},\qquad\pi_i\in\Pi_0^\alpha,\;\hat\pi_j\in\langle\Pi_0^\alpha\rangle_{\QQ(D)}.
\end{equation}
By linearity, we extend $\hat\pi$ to $g^{\hat\pi}$ yielding
\begin{equation}
g^{\pi_i}g^{\hat\pi_j}=\delta_{i,j}.
\end{equation}
From (\ref{PGP}) we hence obtain
\begin{equation}\label{PGpi}
P_{G,\pi}=P_G^\alpha g^{\hat\pi(\alpha)},\quad\text{for }\pi\in\Pi_0^\alpha.
\end{equation}

\begin{ex}\label{ex1pt}
For $|\alpha|=2$ we write $12$ for the pair $\alpha_1,\alpha_2$ and get $\Pi_0^\alpha=\{\{12\}\}$. The dual of $\{12\}$ is $\frac1D\{12\}$. Hence
$$
P_{G,\{12\}}=\frac{P_G^{\alpha_1,\alpha_2}g^{\alpha_1,\alpha_2}}{D}.
$$
For $|\alpha|=4$ we have $\Pi_0^\alpha=\{\{12,34\},\{13,24\},\{14,23\}\}=\{\pi_1,\pi_2,\pi_3\}$. A short calculation gives
$$
\hat\pi_1=\frac{(D+1)\pi_1-\pi_2-\pi_3}{(D-1)D(D+2)}
$$
with cyclic results for $\hat\pi_2$ and $\hat\pi_3$. Hence
$$
P_{G,\{12,34\}}=\frac{P_G^{\alpha_1,\alpha_2,\alpha_3,\alpha_4}((D+1)g^{\alpha_1,\alpha_2}g^{\alpha_3,\alpha_4}-g^{\alpha_1,\alpha_3}g^{\alpha_2,\alpha_4}
-g^{\alpha_1,\alpha_4}g^{\alpha_2,\alpha_3})}{(D-1)D(D+2)},
$$
with cyclic results for $P_{G,\{13,24\}}$ and $P_{G,\{14,23\}}$.

Results up to spin $|\alpha|=10$ are in the Maple package {\tt HyperlogProcedures} \cite{Shlog}.
\end{ex}
For fixed $G$, we lift duality from partitions to formal sums of (spin zero) graphs in the graph algebra with coefficients in $\QQ(D)$.
For $\pi\in\Pi_0^\alpha$ we denote the lift of $\hat\pi=\sum_ic_i\pi_i$ ($c_i\in\QQ(D)$) by $G_{\hat\pi}$, i.e.\ $G_{\hat\pi}=\sum_ic_iGg^{\pi_i}$.
We define the Feynman integral of $G_{\hat\pi}$ as
\begin{equation}\label{PGpihat0}
P_{G_{\hat\pi}}=\sum_ic_iP_{Gg^{\pi_i}}=\sum_ic_iP_G^\alpha g^{\pi_i(\alpha)}=P_G^\alpha g^{\hat\pi(\alpha)}=P_{G,\pi},\quad\text{if }\hat\pi=\sum_ic_i\pi_i,
\end{equation}
where in the last equation we used (\ref{PGpi}). With (\ref{PGP}) we get
\begin{equation}\label{PGP1}
P_G^\alpha=\sum_{\pi\in\Pi_0^\alpha}P_{G_{\hat\pi}}g^\pi.
\end{equation}
Hence, spin $|\alpha|$ one-point integrals can be expressed as sums of scalar Feynman integrals.
\begin{ex}\label{ex1pta}
From Example \ref{ex1pt} we obtain
\begin{align*}
G_{\widehat{\{12\}}}&=\frac{G(\alpha)g^{\alpha_1,\alpha_2}}D,\\
G_{\widehat{\{12,34\}}}&=\frac{(D+1)G(\alpha)g^{\alpha_1,\alpha_2}g^{\alpha_3,\alpha_4}-G(\alpha)g^{\alpha_1,\alpha_3}g^{\alpha_2,\alpha_4}
-G(\alpha)g^{\alpha_1,\alpha_4}g^{\alpha_2,\alpha_3}}{(D-1)D(D+2)},
\end{align*}
where $G(\alpha)$ indicates that the graph $G$ depends on the spin indices $\alpha$. On the right-hand sides all spin indices are contracted, so that the graphs are scalar.
\end{ex}

\section{Two-point functions}\label{secttwopt}
A two-point function has two external vertices $0=z_0$ and $z_1$. By scaling internal variables we obtain
\begin{equation}\label{AGNG}
A_G^\alpha(0,z_1)=\|z_1\|^{-2\lambda N_G}A_G^\alpha(0,\hat z_1),\quad\text{where }\hat z_1=z_1/\|z_1\|.
\end{equation}
A transformation $x_i\mapsto\hat z_1-x_i$ for all internal vertices gives
\begin{equation}
A_G^\alpha(\hat z_1,0)=(-1)^{|\alpha|}A_G^\alpha(0,\hat z_1).
\end{equation}

The Feynman integral $A_G^\alpha(0,\hat z_1)$ is a linear combination of products of $\hat z_1$ and $g$. To express this linear combination, we use a partition
of the set $\{\alpha_1,\ldots,\alpha_{|\alpha|}\}$ into $\pi^0_1,\ldots,\pi^0_N,\pi^1$ where the sets $\pi^0_i=\{\pi^0_{i1},\pi^0_{i2}\}$ are pairs.
The last slot $\pi^1$ may have any number of elements. We order $\pi^0$ before $\pi^1$, so that, e.g., in the case $|\alpha|=4$ we distinguish the partitions
$\pi^0=\{12\},\pi^1=34$ and $\pi^0=\{34\},\pi^1=12$. (To lighten the notation we omit brackets for sets of labels.)
Let $\Pi_1^\alpha$ be the set of all these partitions.
The set $\Pi_0^\alpha$ in the previous section corresponds to the subset of $\Pi_1^\alpha$ with empty last slot. The set $\Pi_1^\alpha$ has
\begin{equation}
|\Pi_1^\alpha|=1+\frac1{1!}\genfrac(){0pt}{}{|\alpha|}{2}+\frac1{2!}\genfrac(){0pt}{}{|\alpha|}{2,2}+\ldots=\sum_{j=0}^{\lfloor\frac{|\alpha|}2\rfloor}\frac{|\alpha|!}{2^j(|\alpha|-2j)!j!}
\end{equation}
elements. For $\pi\in\Pi_1^\alpha$ we use the shorthand
\begin{equation}\label{gzpi}
g^{\pi^0}\hat z_1^{\pi^1}=g^{\pi^0_{11},\pi^0_{12}}\cdots g^{\pi^0_{N1},\pi^0_{N1}}\hat z_1^{\pi^1_1}\cdots\hat z_1^{\pi^1_{|\alpha|-2N}},\quad N=|\pi^0|,
\end{equation}
for the corresponding expansion into products of $g$ and $\hat z_1$.
With (\ref{AGNG}) we get
\begin{equation}\label{AGP}
A_G^\alpha(0,z_1)=\|z_1\|^{-2\lambda N_G}\sum_{\pi\in\Pi_1^\alpha}P_{G,\pi}\,g^{\pi^0}\hat z_1^{\pi^1}.
\end{equation}

The core information of the Feynman integral $A_G^\alpha(0,z_1)$ is encoded in the numbers $P_{G,\pi}$, which
are the $g^{\pi^0}\hat z_1^{\pi^1}$ coefficients of $A_G^\alpha(0,\hat z_1)$.

If $|\alpha|$ is even, integration over $\hat z_1$ connects $P_{G,\pi}$ to the one-point integral of the graph $G_0$ which is $G$ with additional edge $01$ of weight
\begin{equation}\label{nu01}
\nu_{01}=-N_G+\frac\lambda{\lambda+1}.
\end{equation}
Note that $\nu_{01}$ is chosen such that $N(G_0)=0$. If $|\alpha|$ is odd, then $P_{G_0}^\alpha=0$.

\begin{ex}\label{ex2pt}
With the graph $G(\alpha)$ consisting of an edge $01$ with spin $\alpha$ (such that
$A_G^\alpha(0,\hat z_1)=\hat z_1^\alpha$) we get from Example \ref{ex1pta}
\begin{align}
\int_{S^{D-1}}\hat z_1^{\alpha_1}\hat z_1^{\alpha_2}\dd^{D-1}\hat z_1&=\frac1D,\\\nonumber
\int_{S^{D-1}}\hat z_1^{\alpha_1}\hat z_1^{\alpha_2}\hat z_1^{\alpha_3}\hat z_1^{\alpha_4}\dd^{D-1}\hat z_1
&=\frac{g^{\alpha_1,\alpha_2}g^{\alpha_3,\alpha_4}+g^{\alpha_1,\alpha_3}g^{\alpha_2,\alpha_4}+g^{\alpha_1,\alpha_4}g^{\alpha_2,\alpha_3}}{D(D+2)}.
\end{align}
We obtain for $|\alpha|=2$
\begin{equation}
P_{G_0,\{12\}}=P_{G,(\{12\},\emptyset)}+\frac{P_{G,(\emptyset,12)}}D,
\end{equation}
and for $|\alpha|=4$
\begin{equation}
P_{G_0,\{12,34\}}=P_{G,(\{12,34\},\emptyset)}+\frac{P_{G,(\{12\},34)}}D+\frac{P_{G,(\emptyset,1234)}}{D(D+2)}.
\end{equation}
\end{ex}

To calculate $P_{G,\pi}$ from spin zero Feynman integrals (corresponding to unlabeled scalar graphs), we proceed as in the one-point case and define a bilinear form on $\Pi_1^\alpha$,
\begin{equation}
\langle\pi_1,\pi_2\rangle_1=g^{\pi_1^0}\hat z_1^{\pi_1^1}g^{\pi_2^0}\hat z_1^{\pi_2^1}\in\ZZ[D],\quad\text{for }\pi_1,\pi_2\in\Pi_1^\alpha.
\end{equation}
Note that $\langle\pi_1,\pi_2\rangle_1$ does not depend on $\hat z_1$ because all indices are contracted and $\|\hat z_1\|=1$.
\begin{lem}\label{lemnondeg2}
The bilinear form $\langle\cdot,\cdot\rangle_1$ is nondegenerate.
\end{lem}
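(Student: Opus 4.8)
The plan is to imitate the proof of Lemma \ref{lemnondeg1}: represent each partition as a decorated graph, read off $\langle\pi_1,\pi_2\rangle_1$ as a single monomial in $D$, and then force $\det(\langle\pi_i,\pi_j\rangle_1)\neq0$ by a leading-coefficient argument. To each $\pi\in\Pi_1^\alpha$ I associate a graph on the vertex set $\{1,\ldots,|\alpha|\}$ together with one auxiliary vertex $\ast$ standing for $\hat z_1$: every pair $\pi^0_i$ becomes an edge joining its two vertices, and every index in the last slot $\pi^1$ becomes a \emph{leg} joining its vertex to $\ast$. Thus each of the vertices $1,\ldots,|\alpha|$ is met by exactly one edge-or-leg of $\pi$.

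For two partitions $\pi_1,\pi_2$ I overlay the corresponding graphs, colouring $\pi_1$ red and $\pi_2$ blue. Every vertex in $\{1,\ldots,|\alpha|\}$ now has red-degree one and blue-degree one, so the overlay decomposes into cycles through the ordinary vertices and paths whose two ends are legs at $\ast$. Contracting indices exactly as in Lemma \ref{lemnondeg1}, every cycle contributes a closed chain of Kronecker deltas, hence a factor $D$, while every path contributes $\hat z_1\cdot\hat z_1=\|\hat z_1\|^2=1$, since its two endpoints carry a factor $\hat z_1$ and its interior is a chain of deltas (the same holds for a length-zero path, i.e.\ an index lying in both $\pi_1^1$ and $\pi_2^1$). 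Therefore
\begin{equation}
\langle\pi_1,\pi_2\rangle_1=D^{L(\pi_1,\pi_2)}\in\ZZ[D],
\end{equation}
where $L(\pi_1,\pi_2)$ is the number of cycle components of the overlay. In particular the diagonal entry is $\langle\pi,\pi\rangle_1=D^{|\pi^0|}$, because the self-overlay turns each pair into a $2$-cycle and each last-slot index into a path of length zero.

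The only genuine departure from Lemma \ref{lemnondeg1} is that the diagonal degrees $|\pi^0|$ now vary with $\pi$, so I bound degrees termwise. Since every cycle alternates colours it uses at least one red and at least one blue edge, whence $L(\pi_1,\pi_2)\le\min(|\pi_1^0|,|\pi_2^0|)$; moreover, when $|\pi_1^0|=|\pi_2^0|=N$ the value $L=N$ forces every cycle to have length two, i.e.\ $\pi_1^0=\pi_2^0$, and therefore $\pi_1=\pi_2$ (a partition in $\Pi_1^\alpha$ is determined by its pair-set). Expanding
\begin{equation}
\det(\langle\pi_i,\pi_j\rangle_1)=\sum_\sigma\mathrm{sgn}(\sigma)\prod_iD^{L(\pi_i,\pi_{\sigma(i)})},
\end{equation}
each summand is a monomial of degree $\sum_iL(\pi_i,\pi_{\sigma(i)})\le\sum_i\min(|\pi_i^0|,|\pi_{\sigma(i)}^0|)\le\sum_i|\pi_i^0|=:\Delta$. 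The identity attains $\Delta$. If a $\sigma$-term also has degree $\Delta$, then the first inequality forces $|\pi_{\sigma(i)}^0|\ge|\pi_i^0|$ for all $i$, hence (summing over the permutation) $|\pi_{\sigma(i)}^0|=|\pi_i^0|$, and then the second forces $L(\pi_i,\pi_{\sigma(i)})=|\pi_i^0|=|\pi_{\sigma(i)}^0|$, whence $\pi_{\sigma(i)}=\pi_i$ and $\sigma=\mathrm{id}$. Thus the top-degree term $D^\Delta$ occurs exactly once, with coefficient $+1$, so $\det(\langle\pi_i,\pi_j\rangle_1)$ is a nonzero polynomial in $D$ and the form is nondegenerate.

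The main obstacle I anticipate is precisely this determinant bookkeeping: unlike the pure-matching case the Gram matrix has no single common diagonal degree, so one must check that no non-identity permutation matches the identity's total degree. The monotonicity step — $\min$ of the pair-counts along a permutation cannot exceed the diagonal sum, with equality pinning $\sigma$ down — is the crux, and it rests entirely on the clean evaluation $\langle\pi_1,\pi_2\rangle_1=D^{L}$ together with the fact that distinct elements of $\Pi_1^\alpha$ have distinct pair-sets.
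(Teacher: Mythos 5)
Your proposal is correct, and its first half is exactly the paper's computation: the overlay of the two matchings decomposes into alternating cycles and open strings, cycles contribute $D$, open strings terminating in $\hat z_1$ factors contribute $\|\hat z_1\|^2=1$, giving $\langle\pi_1,\pi_2\rangle_1=D^{L(\pi_1,\pi_2)}$ and $\langle\pi,\pi\rangle_1=D^{|\pi^0|}$. Where you genuinely diverge is the determinant bookkeeping, and there your version is actually \emph{more} careful than the paper's. The paper asserts that for $\pi_1\neq\pi_2$ with $\pi_1^0\neq\emptyset\neq\pi_2^0$ the degree of $\langle\pi_1,\pi_2\rangle_1$ is strictly smaller than $\min(|\pi_1^0|,|\pi_2^0|)$, and concludes that (after placing the pairless partition in the first slot) every other row and column has its strictly highest power of $D$ on the diagonal. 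That strict inequality fails whenever one pair-set contains the other: for $|\alpha|=4$, take $\pi_1=(\{12,34\},\emptyset)$ and $\pi_2=(\{12\},34)$; then the pair $12$ forms a $2$-cycle and $34$ an open string, so $\langle\pi_1,\pi_2\rangle_1=D=D^{\min(|\pi_1^0|,|\pi_2^0|)}$, and the row of $\pi_2$ does not have a strictly dominant diagonal degree (its diagonal entry is also $D$). Your global permutation argument closes exactly this gap: bounding each term of the Leibniz expansion by $\sum_iL(\pi_i,\pi_{\sigma(i)})\le\sum_i\min(|\pi_i^0|,|\pi_{\sigma(i)}^0|)\le\sum_i|\pi_i^0|$, then noting that equality forces $|\pi_{\sigma(i)}^0|=|\pi_i^0|$ termwise (since $\sigma$ is a bijection) and, via the all-two-cycles analysis, $\pi_{\sigma(i)}^0=\pi_i^0$ and hence $\pi_{\sigma(i)}=\pi_i$ (the last slot being the complement of the pairs), so $\sigma=\mathrm{id}$. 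Both routes aim at the same conclusion --- $\det(\langle\pi_i,\pi_j\rangle_1)$ is monic of degree $\sum_\pi|\pi^0|$ in $D$ --- but row-wise dominance as stated in the paper does not hold, whereas your summed degree estimate does; your proof (or an equivalent repair, e.g.\ ordering the partitions by $|\pi^0|$ and exploiting that equality $L=\min$ requires containment of pair-sets) is what makes the leading-coefficient argument airtight.
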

\begin{proof}
As in the proof of Lemma \ref{lemnondeg1} we consider $\{1,\ldots,|\alpha|\}$ as vertex set of a graph $\Gamma$.
For any $\pi\in\Pi_1^\alpha$ the edges of $\Gamma$ are the pairs of $\pi^0$, while the elements in $\pi^1$ are isolated vertices. We obtain a graph with vertex degree $\leq1$.
The graph of $\pi_1\cup\pi_2$ with $\pi_1,\pi_2\in\Pi_1^\alpha$ has vertex degree $\leq2$. Loops in $\pi_1\cup\pi_2$ contribute a factor of $D$ to the bilinear form,
whereas (after relabeling the indices) open strings contribute by
$$
\hat z_1^{\alpha_1}g^{\alpha_1,\alpha_2}g^{\alpha_2,\alpha_3}\cdots g^{\alpha_{i-1},\alpha_i}\hat z_1^{\alpha_i}=\hat z_1^{\alpha_1}\hat z_1^{\alpha_1}=\|\hat z_1\|^2=1.
$$
Let $L(\pi_1,\pi_2)$ be the number of loops in the graph of $\pi_1\cup\pi_2$. We get (as in the proof of Lemma \ref{lemnondeg1}) $\langle\pi_1,\pi_2\rangle_1=D^{L(\pi_1,\pi_2)}$.
In particular, $\langle\pi,\pi\rangle_1=D^{|\pi^0|}$.
If $\pi_1\neq\pi_2$ with $\pi_1^0\neq\emptyset\neq\pi_2^0$, the degree of $\langle\pi_1,\pi_2\rangle_1$ is strictly smaller than $\min(|\pi_1^0|,|\pi_2^0|)$.
If $\pi_1^0=\emptyset$ or $\pi_2^0=\emptyset$, the graph of $\pi_1\cup\pi_2$ has no loops and we get $\langle\pi_1,\pi_2\rangle_1=1$.
If we arrange this partition into the first slot, the matrix of $\langle \cdot,\cdot\rangle_1$ has first column and row 1 and in all other columns and rows strictly the highest powers of D on the
diagonal. Therefore, $\det(\langle\pi_1,\pi_2\rangle_1)$ has leading coefficient $1$ in $D$ and thus is non-zero. The claim follows.
\end{proof}

For even spin $|\alpha|$, one can proceed as in the one-point case and split the $|\alpha|$ vertices into $|\alpha|/2$ upper vertices and $|\alpha|/2$ lower vertices. This promotes every $\pi\in\Pi_1^\alpha$ to an element in the
rook-Brauer algebra which is also semisimple for generic parameter $D$ \cite{cell}.
The bilinear form $\langle\cdot,\cdot\rangle_1$ is the trace in the rook-Brauer algebra.

By Lemma \ref{lemnondeg2} every partition $\pi\in\Pi_1^\alpha$ has a dual $\hat\pi=\hat\pi(\alpha)$ in the vector space of formal sums of partitions with coefficients in $\QQ(D)$,
\begin{equation}
\langle\pi_i,\hat\pi_j\rangle_1=\delta_{i,j},\qquad\pi_i\in\Pi_1^\alpha,\;\hat\pi_j\in\langle\Pi_1^\alpha\rangle_{\QQ(D)}.
\end{equation}
By linearity we extend $\hat\pi$ to $g^{\hat\pi^0}z_1^{\hat\pi^1}$ yielding
\begin{equation}
g^{\pi_i^0}\hat z_1^{\pi_j^1}g^{\hat\pi_j^0}\hat z_1^{\hat\pi_j^1}=\delta_{i,j}.
\end{equation}
From (\ref{AGP}) we hence obtain
\begin{equation}\label{PGpi1}
P_{G,\pi}=A_G^\alpha(0,\hat z_1)\,g^{\hat\pi^0}\hat z_1^{\hat\pi^1}=A_G^\alpha(0,\hat z_1)\,g^{\hat\pi^0}Q_{\nu_{01}}^{\hat\pi^1}(0,\hat z_1),
\end{equation}
where we define $g^{\hat\pi^0}Q_{\nu_{01}}^{\hat\pi^1}(0,\hat z_1)$ in analogy to $g^{\hat\pi^0}z_1^{\hat\pi^1}$ in (\ref{gzpi}).
We choose $\nu_{01}$ as in (\ref{nu01}) to ensure that $G\cup01$ is a one-point graph. More precisely, $G\,g^{\hat\pi^0}Q_{\nu_{01}}^{\hat\pi^1}(0,\hat z_1)$ is a linear combination
of graphs with total spin zero. In particular, the vertices $0$ and $1$ can be chosen freely when calculating its Feynman integral (see Lemma \ref{lem:period}) so that we can drop the labels
in $G\,g^{\hat\pi^0}Q_{\nu_{01}}^{\hat\pi^1}(0,\hat z_1)$. Extending the Feynman integral to the graph algebra, we obtain
(compare (\ref{PGpihat0}))
\begin{equation}\label{PGpihat}
P_{G,\pi}=P_{G_{\hat\pi}}.
\end{equation}
Substitution into (\ref{AGP}) gives the two-point function as sum over propagators with spin zero Feynman integral coefficients,
\begin{equation}\label{AGPhat}
A_G^\alpha(0,z_1)=\|z_1\|^{-2\lambda N(G)}\sum_{\pi\in\Pi_1^\alpha}P_{G_{\hat\pi}}\,g^{\pi^0}\hat z_1^{\pi^1}=\sum_{\pi\in\Pi_1^\alpha}P_{G_{\hat\pi}}\,g^{\pi^0}Q^{\pi^1}_{N(G)}(0,z_1).
\end{equation}
With this formula one can eliminate any two-point insertion in a Feynman integral by a sum over propagators in numerator form (\ref{numform}) with coefficients which
are a product of a scalar Feynman integral and metric tensors.
\begin{ex}\label{ex2pta}
We write $G(\alpha)g^{\pi^0(\alpha)}Q_{\nu_{01}}^{\pi^1(\alpha)}$ for the spin zero graph that is $G(\alpha)$ contracted with $g^{\pi^0(\alpha)}$
and edge $01$ of spin $\pi^1(\alpha)$ and weight $\nu_{01}$ as in (\ref{nu01}).
For $|\alpha|=1$, we get $G_{\widehat{(\emptyset,1)}}=G(\alpha_1)Q_{\nu_{01}}^{\alpha_1}$.

For $|\alpha|=2$ we have $\Pi_1^\alpha=\{(\{12\},\emptyset),(\emptyset,12)\}$ and
$$
G_{\widehat{(\{12\},\emptyset)}}=\frac{G(\alpha)g^{\alpha_1,\alpha_2}Q_{\nu_{01}}-G(\alpha)Q_{\nu_{01}}^{\alpha_1,\alpha_2}}{D-1},\qquad
G_{\widehat{(\emptyset,12)}}=\frac{-G(\alpha)g^{\alpha_1,\alpha_2}Q_{\nu_{01}}
+D\,G(\alpha)Q_{\nu_{01}}^{\alpha_1,\alpha_2}}{D-1}.
$$
For $|\alpha|=3$ we have $\Pi_1^\alpha=\{(\{12\},3),(\{13\},2),(\{23\},1),(\emptyset,123)\}$. We obtain
\begin{align*}
G_{\widehat{(\{12\},3)}}&=\frac{G(\alpha)g^{\alpha_1,\alpha_2}Q_{\nu_{01}}^{\alpha_3}-G(\alpha)Q_{\nu_{01}}^{\alpha_1,\alpha_2,\alpha_3}}{D-1},\\
G_{\widehat{(\emptyset,123)}}&=\frac{-G(\alpha)g^{\alpha_1,\alpha_2}Q_{\nu_{01}}^{\alpha_3}-G(\alpha)g^{\alpha_1,\alpha_3}Q_{\nu_{01}}^{\alpha_2}-G(\alpha)g^{\alpha_2,\alpha_3}Q_{\nu_{01}}^{\alpha_1}
+(D+2)G(\alpha)Q_{\nu_{01}}^{\alpha_1,\alpha_2,\alpha_3}}{D-1},\\
\end{align*}
plus two cyclic permutations of the first equation. Results up to $|\alpha|=10$ are in \cite{Shlog}.
\end{ex}

\section{Three-point functions and the definition of graphical functions}\label{sectthreept}
A three-point function has three external vertices $0=z_0$, $z_1$, and $z_2$. By scaling all internal variables we obtain
\begin{equation}
A_G^\alpha(0,z_1,z_2)=\|z_1\|^{-2\lambda N_G}A_G^\alpha(0,z_1/\|z_1\|,z_2/\|z_1\|).
\end{equation}
To define the graphical function of $G$ we use the coordinates (in a suitably rotated coordinate frame)
\begin{equation}\label{eqzdef}
\hat z_1=\frac{z_1}{\|z_1\|}=\left(\begin{array}{c}1\\0\\0\\\vdots\\0\end{array}\right),\qquad
\hat z_2=\frac{z_2}{\|z_1\|}=\left(\begin{array}{c}\Re z\\\Im z\\0\\\vdots\\0\end{array}\right).
\end{equation}
Note that $\hat z_2$ is normalized by the length of $z_1$ and hence not a unit vector in general.
With (\ref{eqzdef}) we obtain
\begin{equation}\label{fdef0}
f_G^\alpha(z)=A_G^\alpha(0,\hat z_1,\hat z_2).
\end{equation}
Alternatively, we may express the invariants
\begin{equation}\label{eqinvs}
\frac{\|z_2-z_0\|^2}{\|z_1-z_0\|^2}=z\zz,\qquad\frac{\|z_2-z_1\|^2}{\|z_1-z_0\|^2}=(z-1)(\zz-1)
\end{equation}
in terms of the complex variable $z$ and its complex conjugate $\zz$. With these identifications we can equivalently define the graphical function of $G$ as the function $f^\alpha_G(z)$ that fulfills the equation
\begin{equation}\label{fdef}
A_G^\alpha(0,z_1,z_2)=\|z_1\|^{-2\lambda N_G}f_G^\alpha(z).
\end{equation}

The Feynman integral $A_G^\alpha(0,\hat z_1,\hat z_2)$ is a linear combination of products of the metric $g$ and the vectors $\hat z_1$, $\hat z_2$. To express this linear combination, we proceed in analogy
to the previous sections and define a partition of the spin index set
$\alpha=\{\alpha_1,\ldots,\alpha_{|\alpha|}\}$ into $\pi^0_1,\ldots,\pi^0_N$, $\pi^1$, $\pi^2$, where the sets $\pi^0_i=\{\pi^0_{i1},\pi^0_{i2}\}$ are pairs.
The slots $\pi^1$ and $\pi^2$ may have any number of elements. We always distinguish between $\pi^0$, $\pi^1$, and $\pi^2$, see Section \ref{secttwopt}.
Let $\Pi_2^\alpha$ be the set of all these partitions. The set $\Pi_2^\alpha$ has 
\begin{equation}
|\Pi_2^\alpha|=2^{|\alpha|}+\frac1{1!}\genfrac(){0pt}{}{|\alpha|}{2}2^{|\alpha|-2}+\frac1{2!}\genfrac(){0pt}{}{|\alpha|}{2,2}2^{|\alpha|-4}+\ldots=\sum_{j=0}^{\lfloor\frac {|\alpha|}2\rfloor}\frac{|\alpha|!}{(|\alpha|-2j)!j!}2^{|\alpha|-3j}
\end{equation}
elements. For $\pi\in\Pi_2^\alpha$ we use the shorthand
\begin{equation}
g^{\pi^0}\hat z_1^{\pi^1}\hat z_2^{\pi^2}=g^{\pi^0_{11},\pi^0_{12}}\cdots g^{\pi^0_{N1},\pi^0_{N2}}\hat z_1^{\pi^1_1}\cdots\hat z_1^{\pi^1_m}\hat z_2^{\pi^2_1}\cdots\hat z_2^{\pi^2_n}
\end{equation}
for the corresponding expansion into products of $g$, $\hat z_1$, and $\hat z_2$ (where $2N+m+n=|\alpha|$).

Using (\ref{eqzdef}), we can express the scalar products $\hat z_i^{\alpha_1}\hat z_j^{\alpha_1}$, $i,j\in\{1,2\}$ in terms of $z$ and $\zz$,
\begin{equation}\label{zij}
\hat z_1^{\alpha_1}\hat z_1^{\alpha_1}=1,\quad\hat z_1^{\alpha_1}\hat z_2^{\alpha_1}=\frac{z+\zz}2,\quad\hat z_2^{\alpha_1}\hat z_2^{\alpha_1}=z\zz.
\end{equation}

With the above notation we obtain
\begin{equation}\label{fGcomps}
f_G^\alpha(z)=\sum_{\pi\in\Pi_2^\alpha}f_{G,\pi}(z)g^{\pi^0}\hat z_1^{\pi^1}\hat z_2^{\pi^2},
\end{equation}
with an analogous expansion for $A_G^\alpha(0,z_1,z_2)$ from Equation (\ref{fdef}).

In the following we often consider the graphical function $f_G^\alpha(z)$ as a vector with components $f_{G,\pi}(z)$,
\begin{equation}\label{FG}
f_G^\alpha(z)\leftrightarrow(f_{G,\pi}(z))_{\pi\in\Pi_2^\alpha}.
\end{equation}
Note that $f_G^\alpha(z)$ depends via $\hat z_1$, $\hat z_2$ on the orientation of the coordinate system,
while the vector $(f_{G,\pi}(z))_{\pi\in\Pi_2^\alpha}$ does not. We can use any
orientation of the coordinate system (and in particular (\ref{eqzdef})) to determine the scalar functions $f_{G,\pi}(z)$.

It is possible to follow the previous sections and express $f_G^\alpha(z)$ in terms of spin zero graphical functions by dualizing $\pi\in\Pi_2^\alpha$ and lifting $\hat\pi$ to the
graph algebra with coefficients in $\QQ(D,z,\zz)$. In particular, we define the bilinear form
\begin{equation}
\langle\pi_1,\pi_2\rangle_2=g^{\pi_1^0}\hat z_1^{\pi_1^1}\hat z_2^{\pi_1^2}g^{\pi_2^0}\hat z_1^{\pi_2^1}\hat z_2^{\pi_2^2}\in\QQ[D,z,\zz],\quad\text{for }\pi_1,\pi_2\in\Pi_2^\alpha.
\end{equation}
\begin{lem}
The bilinear form $\langle\cdot,\cdot\rangle_2$ is nondegenerate.
\end{lem}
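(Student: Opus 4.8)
The plan is to follow the strategy of Lemmas \ref{lemnondeg1} and \ref{lemnondeg2}: I treat $\langle\pi_1,\pi_2\rangle_2$ as a full index contraction of the tensors attached to $\pi_1$ and $\pi_2$, encoded by the overlay graph $H(\pi_1,\pi_2)$ on the vertex set $\{1,\dots,|\alpha|\}$ whose edges are the pairs of $\pi_1^0$ and of $\pi_2^0$, and whose $\hat z_1$- and $\hat z_2$-attachments sit at the vertices of $\pi_1^1,\pi_2^1$ and $\pi_1^2,\pi_2^2$. Each vertex carries exactly one pairing or attachment from $\pi_1$ and one from $\pi_2$, so $H$ decomposes into closed loops and open strings. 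A closed loop contracts to $D$, while an open string contracts to the scalar product $\hat z_i\cdot\hat z_j$ of the two $\hat z$-vectors at its ends, which by (\ref{zij}) equals $1$, $(z+\zz)/2$, or $z\zz$. Hence $\langle\pi_1,\pi_2\rangle_2$ is a polynomial in $D$ of degree $L(\pi_1,\pi_2)$ (the number of loops), whose leading coefficient is the product of the string factors.

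Since nondegeneracy over $\QQ(D,z,\zz)$ only requires the Gram determinant to be a nonzero element of $\QQ[D,z,\zz]$, it suffices to evaluate at one convenient specialization; I take $z=\ii$, $\zz=-\ii$, so that $\hat z_1$ and $\hat z_2$ become orthonormal and $\hat z_i\cdot\hat z_j=\delta_{ij}$. At this point every open string contributes $1$ if its two ends carry the same vector and $0$ otherwise, so $\langle\pi_1,\pi_2\rangle_2=D^{L(\pi_1,\pi_2)}$ when $H(\pi_1,\pi_2)$ has no ``bichromatic'' string and $\langle\pi_1,\pi_2\rangle_2=0$ otherwise.

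Next I order $\Pi_2^\alpha$ by the number of pairs $|\pi^0|$ and isolate the block with $\pi^0=\emptyset$. These partitions are exactly the $2$-colorings of $\{1,\dots,|\alpha|\}$ (each index placed in $\pi^1$ or $\pi^2$); the overlay of two colorings has no edges, hence no loops, and every string has length zero, so the entry is $1$ if the colorings agree and $0$ otherwise. Thus this block is the identity. For a partition with $|\pi^0|=N>0$ the diagonal entry is $D^{N}$, and I claim it strictly dominates its row and column in $D$-degree: as in Lemmas \ref{lemnondeg1}--\ref{lemnondeg2}, every loop consumes at least one pair of $\pi_1^0$, so $L(\pi_1,\pi_2)\le N$ with equality forcing $\pi_1^0=\pi_2^0$; if in addition $\pi_1\neq\pi_2$, then the colorings of the unpaired indices differ, producing a bichromatic string and hence a vanishing entry at $z=\ii$. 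Therefore off the diagonal the degree is strictly below $N$.

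Finally I assemble the determinant. By the Leibniz expansion, the top-degree-in-$D$ contribution can only come from permutations that fix every partition with $|\pi^0|>0$ (using its strictly dominant diagonal) and act arbitrarily inside the $\pi^0=\emptyset$ block; this yields leading coefficient $\det(\mathrm{Id})\cdot\prod_{|\pi^0|>0}1=1$. Hence the Gram determinant is nonzero at $z=\ii$, so it is a nonzero polynomial in $\QQ[D,z,\zz]$ and $\langle\cdot,\cdot\rangle_2$ is nondegenerate. The step that needs care is precisely this last bookkeeping: because the diagonal degrees vary and a whole block sits at degree $0$, one must justify through the Leibniz formula (rather than a single ``product of diagonals'' estimate) that the surviving top-degree term factors as the determinant of the degree-$0$ block times the dominant diagonals --- the same subtlety already met, in simpler form, in Lemma \ref{lemnondeg2}. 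One may note in passing that without specializing, the degree-$0$ block is the $|\alpha|$-fold Kronecker power of $\bigl(\begin{smallmatrix}1 & (z+\zz)/2\\ (z+\zz)/2 & z\zz\end{smallmatrix}\bigr)$, whose determinant $z\zz-(z+\zz)^2/4=(\Im z)^2$ is visibly nonzero, which is the conceptual reason the block is nondegenerate.
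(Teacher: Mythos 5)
Your overall architecture --- the overlay graph with loops contributing $D$ and strings contributing the scalar products (\ref{zij}), the specialization $z=\ii$, $\zz=-\ii$ making $\hat z_1,\hat z_2$ orthonormal, the identity block at $\pi^0=\emptyset$, and a Leibniz-expansion assembly --- is sound, and it is in fact more explicit than the paper's own proof, which simply reduces to Lemma \ref{lemnondeg2} with the remark that open strings now contribute $D$-independent factors. However, your diagonal-dominance step contains a false claim: $L(\pi_1,\pi_2)=|\pi_1^0|=N$ does \emph{not} force $\pi_1^0=\pi_2^0$, only $\pi_1^0\subseteq\pi_2^0$. Every loop alternates $\pi_1$-edges and $\pi_2$-edges, so $N$ loops force each $\pi_1$-pair to be doubled by an identical $\pi_2$-pair, but $\pi_2^0$ may contain additional pairs, which then lie on strings rather than loops. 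Concretely, for $|\alpha|=4$ take $\pi_1=(\{12\},34,\emptyset)$ and $\pi_2=(\{12,34\},\emptyset,\emptyset)$: then $\langle\pi_1,\pi_2\rangle_2=D\,(\hat z_1\cdot\hat z_1)=D$, which has the same $D$-degree as the diagonal entry $\langle\pi_1,\pi_1\rangle_2=D$, and the string is monochromatic (both ends $\hat z_1$), so the entry survives your specialization. Hence ``off the diagonal the degree is strictly below $N$'' is wrong, and the parenthetical justification ``using its strictly dominant diagonal'' in your final bookkeeping fails as stated. (This subtlety is invisible in Lemma \ref{lemnondeg2}, where two distinct partitions cannot share the same nonempty $\pi^0$ because $\pi^1$ is the complement; it is genuinely new for $\Pi_2^\alpha$, and the paper's one-line reduction glosses over it as well.)

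The conclusion is recoverable inside your own framework, because the offending full-degree entries sit opposite a partner with strictly larger $|\pi^0|$, where the entry is deficient. Bound the total degree of a Leibniz term by $\sum_i L(\pi_i,\pi_{\sigma(i)})\le\sum_i\min\bigl(|\pi_i^0|,|\pi_{\sigma(i)}^0|\bigr)\le\sum_i|\pi_i^0|$. Equality in the second inequality forces $|\pi_{\sigma(i)}^0|=|\pi_i^0|$ for all $i$ (since $\sigma$ is a permutation), and equality in the first then forces every loop to be a $2$-cycle, i.e.\ $\pi_{\sigma(i)}^0=\pi_i^0$ for all $i$. So top-degree permutations preserve each exact-$\pi^0$ class; at your specialization each such class-block is $D^{N}$ times the identity (differing colorings of the unpaired indices produce a vanishing zero-length string), so the leading coefficient of the Gram determinant in $D$ is $1$ and nondegeneracy follows. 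In your transposition example the correction is visible directly: $\det\bigl(\begin{smallmatrix}D & D\\ D & D^2\end{smallmatrix}\bigr)=D^3-D^2\neq0$, with the off-diagonal product subleading in \emph{total} degree even though one factor attains the full degree of its row. With this repair your proof is complete and, via the specialization, somewhat cleaner than the unspecialized route through the Kronecker power of (\ref{zz}); your closing observation about that block is correct and is the conceptual content behind the paper's terse appeal to Lemma \ref{lemnondeg2}.
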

\begin{proof}
We proceed as in the proof of Lemma \ref{lemnondeg2}. Open strings may now terminate at $\hat z_1$ or at
$\hat z_2$. With (\ref{zij}) their value depends on $z$ and $\zz$ but not on the dimension $D$.
This does not affect the argument in the proof of Lemma \ref{lemnondeg2}.
\end{proof}

\begin{ex}
For $|\alpha|=1$ we obtain $\Pi_2^\alpha=\{(\emptyset,1,\emptyset),(\emptyset,\emptyset,1)\}$. For the vectors $\hat z_1$, $\hat z_2$ the Gram matrix $(\hat z_i\cdot\hat z_j)$ is
(see Equation (\ref{zij}))
\begin{equation}\label{zz}
\left(\begin{array}{cc}1&\frac{z+\zz}2\\
\frac{z+\zz}2&z\zz\end{array}\right)
\end{equation}
with determinant $-(z-\zz)^2/4$.
We read off the duals of $\Pi_2^\alpha$ lifted to the graph algebra (see Example \ref{ex2pta}),
\begin{align*}
G_{\widehat{(\emptyset,1,\emptyset)}}&=\frac{-4z\zz G(\alpha)\hat z_1^{\alpha_1}+2(z+\zz)G(\alpha)\hat z_2^{\alpha_1}}{(z-\zz)^2},\\
G_{\widehat{(\emptyset,\emptyset,1)}}&=\frac{2(z+\zz)G(\alpha)\hat z_1^{\alpha_1}-4G(\alpha)\hat z_2^{\alpha_1}}{(z-\zz)^2},
\end{align*}
where we write $G(\alpha)\hat z_i^\alpha$ for the graph $G$ with an additional edge with propagator $\hat z_i^\alpha$ (i.e.\ an edge $01$ or $0z$ with weight $-1/2\lambda$).
\end{ex}
Due to the dependence on $z$ and $\zz$, the calculation of the duals is significantly more complicated
than in the case of $\Pi^\alpha_0$ and $\Pi^\alpha_1$. In {\tt HyperlogProcedures} only the results
up to $|\alpha|=5$ are calculated \cite{Shlog}. Empirically, we find that for any $\alpha$ the denominators
of the duals factorize into powers of $z-\zz$.

Note, however, that dualizing graphs is not as efficient as in the previous sections because one obtains
large sums and (more importantly) the spin zero expression cannot be expressed in terms of unlabeled graphs: There exists no identity that swaps internal and external vertices.
Moreover, in many cases the scalar graphical functions suffer from higher order singularities. Nevertheless, dualizing is an important tool for calculating small
graphical functions which cannot be calculated by other means (a kernel graphical function). The singularities in the individual terms can be handled with dimensional regularization.
Dualizing ensures that every scalar coefficient $f_{G,\pi}(z)$ inherits the general properties of scalar graphical functions.

In most cases, rather than expressing $f_G^\alpha(z)$ in terms of spin zero graphical functions, we try to construct $f_G^\alpha(z)$ from the empty graphical function (or a known kernel)
by the following five operations:
\begin{enumerate}
\item elimination of two-point insertions using Section \ref{secttwopt},
\item adding edges between external vertices,
\item permutation of external vertices,
\item product factorization,
\item appending an edge to the external vertex $z$.
\end{enumerate}
These operations are explained in the next sections.

\section{Identities for graphical functions}
\subsection{Edges between external vertices}\label{sectextedge}
Edges between external vertices correspond to constant factors in the Feynman integral.
The graphical function $f_G^\alpha(z)$ is multiplied by the propagator $Q^\beta_\nu(z_i,z_j)$ of the external edge $z_iz_j$.
The spin changes accordingly; contraction of indices lowers the spin, otherwise the spin increases. The vector $f_{G,\pi}(z)$ in (\ref{FG}) is multiplied by a rectangular matrix.

\begin{ex}\label{exextedge}
Consider an edge of weight $\nu$ and spin 1 from $1=z_1$ to $z=z_2$. The propagator of such an edge in the coordinates (\ref{eqzdef}) is
$$
Q^{\beta}_\nu(z)=\frac{\hat z_2^{\beta}-\hat z_1^{\beta}}{((z-1)(\zz-1))^{\lambda\nu+\frac12}}.
$$
Assume the graphical function $f_G^\alpha(z)$ also has spin 1,
$$
f_G^{\alpha_1}(z)=f_1(z)\hat z_1^{\alpha_1}+f_2(z)\hat z_2^{\alpha_1},
$$
such that $f_1(z)=f_{G,(\emptyset,1,\emptyset)}(z)$ and $f_2(z)=f_{G,(\emptyset,\emptyset,1)}(z)$. The vector $(f_1(z),f_2(z))^T$ has two components.
If $\beta=\alpha_1$, so that the index $\alpha_1$ is contracted, then (see (\ref{zij}))
$$
f_G^{\alpha_1}(z)Q^{\alpha_1}_\nu(z)=\frac{(z+\zz-2)f_1(z)+(2z\zz-z-\zz)f_2(z)}{2((z-1)(\zz-1))^{\lambda\nu+\frac12}}.
$$
The matrix of the multiplication is $1\times2$,
$$
\frac1{2((z-1)(\zz-1))^{\lambda\nu+\frac12}}\left(z+\zz-2,\quad 2z\zz-z-\zz\right).
$$
If $\beta=\alpha_2\neq\alpha_1$, we obtain the spin $2$ graphical function
$$
f_G^{\alpha_1}(z)Q^{\alpha_2}_\nu(z)=\frac{-f_1(z)\hat z_1^{\alpha_1}\hat z_1^{\alpha_2}+f_1(z)\hat z_1^{\alpha_1}\hat z_2^{\alpha_2}
-f_2(z)\hat z_2^{\alpha_1}\hat z_1^{\alpha_2}+f_2(z)\hat z_2^{\alpha_1}\hat z_2^{\alpha_2}}
{((z-1)(\zz-1))^{\lambda\nu+\frac12}}.
$$
The vector of the spin $2$ graphical function has the components $(\emptyset,12,\emptyset)$, $(\emptyset,1,2)$, $(\emptyset,2,1)$, $(\emptyset,\emptyset,12)$.
The component corresponding to the metric $g^{\alpha_1,\alpha_2}$ is zero, $f_{G,(\{12\},\emptyset,\emptyset)}=0$.
This leads to the $5\times2$ multiplication matrix
$$
\frac1{((z-1)(\zz-1))^{\lambda\nu+\frac12}}\left(\begin{array}{cc}0&0\\
-1&0\\
1&0\\
0&-1\\
0&1\end{array}\right).
$$
\end{ex}

\subsection{Permutation of external vertices}\label{sectpermute}
A transformation $x_i\mapsto z_1-x_i$ at all internal vertices gives
\begin{equation}
A_G^\alpha(0,z_1,z_2)=(-1)^{|\alpha|}A_G^\alpha(z_1,0,z_1-z_2).
\end{equation}
On the right-hand side, the positions of $z_0=0$ and $z_1$ are swapped. The transformation
$z_2\mapsto z_1-z_2$ implies $\hat z_2\mapsto\hat z_1-\hat z_2$ and, via the invariants (\ref{eqinvs}),
the map $z\mapsto1-z$. From (\ref{fdef}) we get the transformation of the graph $G=G_{01z}$ with external labels $0$, $1$, $z$,
\begin{equation}
f_{G_{01z}}^\alpha(z)=(-1)^{|\alpha|}f_{G_{10z}}^\alpha(1-z)=(-1)^{|\alpha|}f_{G_{10(1-z)}}^\alpha(z),
\end{equation}
where the last identity defines $f_{G_{10(1-z)}}^\alpha(z)$.

\begin{ex}
We continue Example \ref{exextedge} and consider a spin $1$ graphical function $f_G^{\alpha_1}(z)$.
Swapping $0$ and $1$ in the graph $G=G_{01z}$ gives
$$
-f_1(1-z)\hat z_1^\alpha-f_2(1-z)(\hat z_1^\alpha-\hat z_2^\alpha),
$$
which is represented by the $2\times2$ matrix
$$
\left(\begin{array}{cc}-1&-1\\
0&1\end{array}\right)
$$
together with the M\"obius transformation $z\mapsto1-z$ in the arguments of the components.
Note that the matrix squares to the identity, which reflects that the transformation is an involution.
\end{ex}

If we swap $z_1$ and $z_2$ in (\ref{fdef}) we get a factor $\|z_2\|^{-2\lambda N_G}$ and a transformation $z\mapsto z^{-1}$ from (\ref{eqinvs}).
Moreover, $\hat z_1^\alpha\mapsto z_2^\alpha/\|z_2\|=\hat z_2^\alpha/\|\hat z_2\|$ and $\hat z_2^\alpha\mapsto z_1^\alpha/\|z_2\|=\hat z_1^\alpha/\|\hat z_2\|$.
This implies that $\hat z_1$ and $\hat z_2$ are swapped together with an extra scaling factor $\|\hat z_2\|^{-1}=(z\zz)^{-1/2}$.

Altogether, we obtain a scale transformation, the inversion $z\mapsto z^{-1}$, and a permutation $\hat z_1\leftrightarrow \hat z_2$ in the spin structure.

\begin{ex}
For the spin $1$ graphical function $f_G^{\alpha_1}(z)$ we obtain from swapping labels $1$ and $z$ in the graph $G=G_{01z}$,
$$
(z\zz)^{-\lambda N_G-\frac12}(f_1(z^{-1})\hat z_2^\alpha+f_2(z^{-1})\hat z_1^\alpha).
$$
This corresponds to multiplication with the matrix
$$
(z\zz)^{-\lambda N_G-\frac12}\left(\begin{array}{cc}0&1\\
1&0\end{array}\right)
$$
together with the M\"obius transformation $z\mapsto1/z$ in the arguments of the components.
\end{ex}

\begin{ex}\label{expermext}
For a spin $2$ graphical function
\begin{equation}\label{eqpermext}
f_G^{\alpha_1,\alpha_2}(z)=f_0(z)g^{\alpha_1,\alpha_2}+f_1(z)\hat z_1^{\alpha_1}\hat z_1^{\alpha_2}+f_2(z)\hat z_1^{\alpha_1}\hat z_2^{\alpha_2}+f_3(z)\hat z_2^{\alpha_1}\hat z_1^{\alpha_2}+f_4(z)\hat z_2^{\alpha_1}\hat z_2^{\alpha_2}
\end{equation}
we obtain the transformation matrix
$$
(z\zz)^{-\lambda N_G-1}(z\zz f_0(z^{-1})g^{\alpha_1,\alpha_2}+f_1(z^{-1})\hat z_2^{\alpha_1}\hat z_2^{\alpha_2}+f_2(z^{-1})\hat z_2^{\alpha_1}\hat z_1^{\alpha_2}
+f_3(z^{-1})\hat z_1^{\alpha_1}\hat z_2^{\alpha_2}+f_4(z^{-1})\hat z_1^{\alpha_1}\hat z_1^{\alpha_2},
$$
corresponding to the matrix
$$
(z\zz)^{-\lambda N_G-1}\left(\begin{array}{ccccc}z\zz&0&0&0&0\\
0&0&0&0&1\\0&0&0&1&0\\0&0&1&0&0\\0&1&0&0&0\end{array}\right)
$$
together with the transformation $z\mapsto1/z$ in the arguments.
\end{ex}

Transformations of labels $0\leftrightarrow 1$ and $1\leftrightarrow z$ generate the transformation group $S_3$ of the three external vertices $0$, $1$, and $z$. So, every transformation of external vertices can
be expressed as a sequence of the transformations $0\leftrightarrow 1$ and $1\leftrightarrow z$.

\subsection{Product factorization}\label{sectproduct}
If the graph $G$ of a three-point function or a graphical function disconnects upon removal of the three external vertices, $G=G_1\cup G_2$ with $G_1\cap G_2\subseteq\{0,1,z\}$,
then the Feynman integral trivially factorizes into Feynman integrals over the internal vertices of $G_1$ and $G_2$. This implies
\begin{equation}
f_G^\alpha(z)=f_{G_1}^{\beta_1}(z)f_{G_2}^{\beta_2}(z),
\end{equation}
where, after the elimination of contractions, $\alpha=(\beta_1\cup\beta_2)\setminus (\beta_1\cap\beta_2)$.

\section{Appending an edge}\label{sectappendedge}
\begin{figure}
\begin{center}
\includegraphics{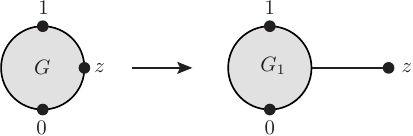}
\end{center}
\caption{Appending an edge to the vertex $z$ in $G$ gives $G_1$.}
\label{fig:append}
\end{figure}

In this section, we present the main calculation technique for graphical functions: appending an edge to a graph $G$ at the external vertex $z$, thus creating a graph $G_1$ with a new vertex $z$, see Figure \ref{fig:append}.
The appended edge is scalar with weight 1. Appending a sequence of edges and external differentiation at
$z=z_2$ allows one also to append edges with non-zero spin and weights $\neq1$, see Section \ref{sectappgen}.

Because the appended edge has spin 0, weight 1, we obtain from (\ref{Qdelta}) for $k=1$,
\begin{equation}
\Delta_{z_2}A^\alpha_{G_1}(z_0,z_1,z_2)=-\frac4{\Gamma(\lambda)}A^\alpha_G(z_0,z_1,z_2),
\end{equation}
where $\Delta_{z_2}=\partial^{\alpha_1}_{z_2}\partial^{\alpha_1}_{z_2}$ is the $D$-dimensional Laplace operator.

In the following section, we will translate this differential equation into an effective Laplace equation
\begin{equation}\label{eqF1fromf}
\mybox_\lambda^\alpha f^\alpha_{G_1}(z)=-\frac1{\Gamma(\lambda)}f^\alpha_G(z)
\end{equation}
acting on the graphical function $f^\alpha_{G_1}(z)$.

We use the symbol $\mybox$ for differential operators that act on the variable $z\in\CC$.
The subscript $\lambda$ refers to the dimension $D$ while the superscript $\alpha$ carries the spin structure.

We will then show how to construct the unique solution of the effective Laplace equation in the space of graphical functions.

\subsection{The effective Laplace operator $\mybox_\lambda^\alpha$}
We first determine the effect of the differential operators $\partial_{z_i}^\beta$ on a
graphical function $f_G^\alpha(z)$.
We consider $f_G^\alpha(z)$ as a function of the invariants $z\zz$ and $(z-1)(\zz-1)$, see (\ref{eqinvs}).
Let $\partial_s$ be the differential with respect to the invariant $(z-s)(\zz-s)$ for $s=0,1$.

We define the differential operators
\begin{equation}
\delta_k=\frac1{z-\zz}(z^k\partial_z-\zz^k\partial_\zz),\qquad k=0,1,2.
\end{equation}
Note that by explicit calculation
\begin{equation}\label{delta2}
\delta_2=-z\zz\delta_0+(z+\zz)\delta_1.
\end{equation}
Moreover, $\delta_0$, $\delta_1$, $\delta_2$ span a solvable three-dimensional Lie-Algebra, $[\delta_0,\delta_1]=0$, $[\delta_0,\delta_2]=\delta_0$, $[\delta_1,\delta_2]=\delta_1$.

For every component $f_{G,\pi}(z)$ of $f_G^\alpha(z)$ ($\pi\in\Pi_2^\alpha$, see (\ref{fGcomps})) we obtain
\begin{equation}\label{diffz_zz}
\partial_zf_{G,\pi}(z)=\zz\partial_0f_{G,\pi}(z)+(\zz-1)\partial_1f_{G,\pi}(z),\qquad\partial_\zz f_{G,\pi}(z)=z\partial_0f_{G,\pi}(z)+(z-1)\partial_1f_{G,\pi}(z).
\end{equation}
This yields
\begin{equation}\label{diff01}
\partial_0=\delta_1-\delta_0,\qquad\partial_1=-\delta_1.
\end{equation}
Using (\ref{fdef}) we obtain
\begin{equation}\label{pz2A}
\partial_{z_2}^\beta A_G^\alpha(0,z_1,z_2)=\|z_1\|^{-2\lambda N_G}\partial_{z_2}^\beta f_G^\alpha(z)\,=\,\|z_1\|^{-2\lambda N_G}\sum_{\pi\in\Pi_2^\alpha}\partial_{z_2}^\beta f_{G,\pi}(z)\,g^{\pi^0}\hat z_1^{\pi^1}\hat z_2^{\pi^2}.
\end{equation}
The variable $z$ is connected to $z_2$ by the invariants (\ref{eqinvs}). This yields in analogy to (\ref{diffz_zz}),
\begin{equation}
\partial_{z_2}^\beta f_{G,\pi}(z)\,\hat z_2^{\pi^2}=
\frac{2z_2^\beta}{\|z_1\|^2}\partial_0f_{G,\pi}(z)\,\hat z_2^{\pi^2}+\frac{2(z_2^\beta-z_1^\beta)}{\|z_1\|^2}\partial_1f_{G,\pi}(z)\,\hat z_2^{\pi^2}+f_{G,\pi}(z)\sum_{\gamma\in\pi^2}\frac{g^{\beta,\gamma}}{\|z_1\|}\,\hat z_2^{\pi^2_1}\cdots\widehat{\hat z_2^\gamma}\cdots\hat z_2^{\pi^2_n},
\end{equation}
where $n=|\pi^2|$. We consider the vectors $\hat z_2^\gamma$ as independent
variables and define
\begin{equation}
\partial_{\hat z_2^\gamma}\hat z_2^\beta=\delta_{\beta,\gamma}=g^{\beta,\gamma}.
\end{equation}
From Equation (\ref{diff01}) we obtain
\begin{equation}
\partial_{z_2}^\beta f_{G,\pi}(z)\,g^{\pi^0}\hat z_1^{\pi^1}\hat z_2^{\pi^2}=\|z_1\|^{-1}\Big(2\hat z_2^\beta(\delta_1-\delta_0)
+2(\hat z_2^\beta-\hat z_1^\beta)(-\delta_1)+\partial_{\hat z_2^\beta}\Big)f_{G,\pi}(z)\,g^{\pi^0}\hat z_1^{\pi^1}\hat z_2^{\pi^2}.
\end{equation}
Upon differentiation with respect to $z_2^\beta$ the weight of the graph $G$ increases
by $1/2\lambda$, so that we determine from (\ref{pz2A}) the effect of $\partial_{z_2}^\beta$ on the
graphical function $f_G^\alpha(z)$ as
\begin{equation}\label{pz2}
\partial_{z_2}^\beta\rightarrow 2\hat z_1^\beta\delta_1-2\hat z_2^\beta\delta_0+\partial_{\hat z_2^\beta}.
\end{equation}
A similar but slightly more complicated calculation using (\ref{delta2}) yields
for $\partial_{z_1}^\beta$ the correspondence
\begin{equation}\label{pz1}
\partial_{z_1}^\beta\rightarrow -\hat z_1^\beta(2\lambda N_G+|\pi^1|+|\pi^2|+2\delta_2)+2\hat z_2^\beta\delta_1+\partial_{\hat z_1^\beta},
\end{equation}
where we write
\begin{equation}
|\pi^i|=\sum_{\gamma\in\alpha}\hat z_i^\gamma\partial_{\hat z_i^\gamma}
\end{equation}
for the size of the $i$-component in the $\pi\in\Pi^\alpha_2$, $i=1,2$.

Finally, because the Feynman integral $A_G^\alpha(z_0,z_1,z_2)$ is a function of $z_1-z_0$ and $z_2-z_0$,
we have
\begin{equation}\label{pz0}
\partial_{z_0}^\beta=-\partial_{z_1}^\beta-\partial_{z_2}^\beta.
\end{equation}
For the calculation of the Laplace operator $\Delta_{z_2}=\partial_{z_2}^\beta\partial_{z_2}^\beta$, we observe that the first two terms in (\ref{pz2}) square
to the scalar effective $D$-dimensional Laplace operator $4\mybox_\lambda$, where \cite{gf} (with a proof in non-integer dimensions in \cite{parG})
\begin{equation}\label{DeltaD}
\mybox_\lambda=\partial_z\partial_\zz-\frac{\lambda}{z-\zz}(\partial_z-\partial_\zz)=\partial_z\partial_\zz-\lambda\delta_0.
\end{equation}
The mixed terms are
\begin{equation}
4(\hat z_1^\beta\delta_1-\hat z_2^\beta\delta_0)\partial_{\hat z_2^\beta}=4(\delta_1\hat z_1^\beta\partial_{\hat z_2^\beta}-|\pi_2|\delta_0).
\end{equation}
Altogether, we obtain the correspondence
\begin{equation}\label{boxalpha}
\frac{\Delta_{z_2}}4\rightarrow\mybox_\lambda^\alpha=\mybox_{\lambda+|\pi_2|}+\delta_1\hat z_1^\beta\partial_{\hat z_2^\beta}+\frac14\partial_{\hat z_2^\beta}\partial_{\hat z_2^\beta}.
\end{equation}
If one sorts the vector $f_{G,\pi}(z)$ by the number of $\hat z_2$ factors, then the
matrix form of $\mybox_\lambda^\alpha$ is triangular with the scalar effective Laplace operator \(\mybox_{\lambda+|\pi_2|}\) of dimension $D+2|\pi_2|$ on the diagonal.

The inversion of $\mybox_\lambda^\alpha$ can be reduced to the inversion of the scalar effective Laplace operators in $D+2j$ dimensions, $j=0,\ldots,|\alpha|$.
This problem is solved for even integer dimensions in \cite{gfe}. An extension to non-integer dimensions (in dimensional regularization) is in \cite{7loops}.

\begin{ex}
For spin $1$ we have $f_G^{\alpha_1}(z)=f_1(z)\hat z_1^{\alpha_1}+f_2(z)\hat z_2^{\alpha_1}$. The matrix of $\mybox_\lambda^{\alpha_1}$ is
\begin{equation}\label{matrix1}
\left(\begin{array}{cc}\mybox_\lambda&\delta_1\\0&\mybox_{\lambda+1}\end{array}\right).
\end{equation}
\end{ex}

\begin{ex}
For spin $2$, see Example \ref{expermext}, we represent $f_G^{\alpha_1,\alpha_2}(z)$ as the five-tuple $f_0(z),\ldots,f_4(z)$ of scalar graphical functions according to (\ref{eqpermext}).
The matrix of the spin $2$ effective Laplace operator $\mybox_\lambda^{\alpha_1,\alpha_2}$ is
\begin{equation}\label{matrix2}
\left(\begin{array}{ccccc}\mybox_\lambda&&&&\frac12\\
&\mybox_\lambda&\delta_1&\delta_1&\\
&&\mybox_{\lambda+1}&&\delta_1\\
&&&\mybox_{\lambda+1}&\delta_1\\
&&&&\mybox_{\lambda+2}\end{array}\right),
\end{equation}
where empty entries are zero.
\end{ex}

\subsection{Inverting $\mybox_\lambda^\alpha$ in the regular case}\label{sectappint}
The $D$-dimensional effective Laplace operator $\mybox_\lambda^\alpha$ can be represented by a triangular matrix whose diagonal is populated by scalar $D+2j$ dimensional
effective Laplace operators $\mybox_{\lambda+j}$ for $j=0,1,\ldots,|\alpha|$. To append an edge to a graph $G$, these $\mybox_{\lambda+j}$ need to be inverted.

Here we consider the situation that in dimension $D=2n+4-\epsilon$, $n=0,1,2,\ldots$, the limit $\epsilon=0$ is convergent.
In this case (which contains convergent graphical functions in integer dimensions), we call the graphical function regular.

In the regular case, the inversion of $\mybox_{\lambda+j}$ is unique in the space of scalar $D+2j$-dimensional graphical functions \cite{gfe,7loops}. There exists an efficient algorithm for inverting $\mybox_{\lambda+j}$ in the function space 
of generalized single-valued hyperlogarithms (GSVHs) \cite{GSVH}.
For low loop orders (typically $\leq7$), the space of GSVHs is sufficiently
general to perform one-scale QFT calculations. At higher loop orders, it is known that
GSVHs will not suffice (see e.g.\ \cite{K3}).

In the following, we will extend the inversion
of $\mybox_\lambda$ to positive spin by constructing an algorithm for the
inversion of $\mybox_\lambda^\alpha$.
We will see that a subtlety arises from poles at $z=1$.

Formally, the effective Laplace operator $\mybox_\lambda^\alpha$ can be inverted by
splitting it into the diagonal part $\mybox_{\lambda+|\pi_2|}$ and the nilpotent
part $\delta_1\hat z_1^\beta\partial_{\hat z_2^\beta}+\frac14\partial_{\hat z_2^\beta}\partial_{\hat z_2^\beta}$, see (\ref{boxalpha}).
By expanding the geometric series we obtain
\begin{equation}
(\mybox_\lambda^\alpha)^{-1}=\sum_{k=0}^{|\alpha|}\Big(-(\mybox_{\lambda+|\pi_2|})^{-1}\Big(\delta_1\hat z_1^\beta\partial_{\hat z_2^\beta}+\frac14\partial_{\hat z_2^\beta}\partial_{\hat z_2^\beta}\Big)\Big)^k(\mybox_{\lambda+|\pi_2|})^{-1}.
\end{equation}
Alternatively, one can use a bootstrap algorithm that constructs the inverse from
more $\hat z_2$ factors to less $\hat z_2$ factors (bottom up in (\ref{matrix1}) and (\ref{matrix2})). Concretely, we recursively solve the effective Laplace equation
\begin{equation}\label{fF}
(\mybox_\lambda^\alpha)^{-1}f^\alpha(z)=F^\alpha(z)
\end{equation}
by extracting the term $f_k^\alpha(z)$ of $f^\alpha(z)$ with the maximum $k$ number of factors of $\hat z_2$ in the
vector decomposition. In the first step of the algorithm this typically corresponds to the component $(\emptyset,\emptyset,\alpha)\in\Pi_2^\alpha$ with $k=|\alpha|$. We have
\begin{equation}\label{eqfz2}
f^\alpha(z)=f_k^\alpha(z)+\;\text{terms with $<k$ factors of $\hat z_2$.}
\end{equation}
The corresponding term $F_k^\alpha(z)$ in $F^\alpha(z)$ is given by the inversion
of $\mybox_{\lambda+k}$ (bottom right corners in (\ref{matrix1}) and (\ref{matrix2})),
\begin{equation}\label{eqFz2}
F_k^\alpha(z)=\mybox_{\lambda+k}^{-1}f_k^\alpha(z).
\end{equation}
From (\ref{boxalpha}) we obtain
\begin{align}\label{eqFf1}
F^\alpha(z)&=F_k^\alpha(z)+(\mybox_\lambda^\alpha)^{-1}g_k^\alpha(z),\quad\text{where}\\\nonumber
g_k^\alpha(z)&=f^\alpha(z)-f_k^\alpha(z)-
\delta_1\hat z_1^\beta\partial_{\hat z_2^\beta}F_k^\alpha(z)-\frac14\partial_{\hat z_2^\beta}\partial_{\hat z_2^\beta}F_k^\alpha(z).
\end{align}
The function $g_k^\alpha(z)$ has $\leq k-1$ factors of $\hat z_2$.
We continue solving (\ref{fF}) with $f^\alpha(z)\rightarrow g_k^\alpha(z)$
until we reach the scalar case $\alpha=\emptyset$ with $g_0(z)=0$.
Finally, we obtain $F^\alpha(z)=\sum_{k=0}^{|\alpha|}F_k^\alpha(z)$, see Example \ref{exlong}.

The advantage of the bootstrap algorithm is that each $\mybox_{\lambda+j}$, 
$j=0,\ldots,|\alpha|$, has to be inverted only once.

\begin{ex}\label{exexplicit}
For $|\alpha|=1$ we write $f_G^{\alpha_1}(z)=f_1(z)\hat z_1^{\alpha_1}+f_2(z)\hat z_2^{\alpha_1}$, see Example \ref{exextedge}. We obtain
\begin{equation}\label{eqF1}
F^{\alpha_1}(z)=\mybox_\lambda^{-1}\big(f_1(z)-\delta_1\mybox_{\lambda+1}^{-1}f_2(z)\big)\;\hat z_1^{\alpha_1}+\mybox_{\lambda+1}^{-1}f_2(z)\;\hat z_2^{\alpha_1}.
\end{equation}

For $|\alpha|=2$ we use the notation of Example \ref{expermext}. We obtain
\begin{align}\label{eqF2}
F^{\alpha_1,\alpha_2}(z)&=\mybox_\lambda^{-1}\big(f_0(z)-\frac12\mybox_{\lambda+2}^{-1}f_4(z)\big)\;g^{\alpha_1,\alpha_2}\\\nonumber
&+\mybox_\lambda^{-1}\big(f_1(z)-\delta_1\mybox_{\lambda+1}^{-1}(f_2(z)+f_3(z))+2\delta_1\mybox_{\lambda+1}^{-1}\delta_1\mybox_{\lambda+2}^{-1}f_4(z)\big)\;
\hat z_1^{\alpha_1}\hat z_1^{\alpha_2}\\\nonumber
&+\mybox_{\lambda+1}^{-1}\big(f_2(z)-\delta_1\mybox_{\lambda+2}^{-1}f_4(z)\big)\;
\hat z_1^{\alpha_1}\hat z_2^{\alpha_2}\\\nonumber
&+\mybox_{\lambda+1}^{-1}\big(f_3(z)-\delta_1\mybox_{\lambda+2}^{-1}f_4(z)\big)\;
\hat z_1^{\alpha_2}\hat z_2^{\alpha_1}\\\nonumber
&+\mybox_{\lambda+2}^{-1}f_4(z)\;\hat z_2^{\alpha_1}\hat z_2^{\alpha_2}.
\end{align}
\end{ex}
The main difficulty is to identify the right functions in the pre-image of
$\mybox_{\lambda+j}$ (i.e.\ to control the kernel of $\mybox_{\lambda+j}$).
In the scalar case this is facilitated by an analysis of the singular
structure of the pre-images. Theorem 36 in \cite{gfe} ensures that the pre-image is unique in the space of graphical functions.
When we extend this approach to positive spin, a naive inversion
of $\mybox_{\lambda+j}$ will not suffice.

We use the general structure of scalar graphical functions which are proved to have singularities only at $z=0,1$, or $\infty$ \cite{par}.
At $z=s$, $s=0,1$, the coefficients of scalar graphical functions have single-valued log-Laurent expansions \cite{gfe}:
\begin{equation}\label{01expansion}
f_{G,\pi}(z)=\sum_{\ell=0}^\VGint\sum_{m,\mm=M_s}^\infty c_{\ell,m,\mm}^{\pi,s}[\log(z-s)(\zz-s)]^\ell(z-s)^m(\zz-s)^\mm\quad\text{if }|z-s|<1,
\end{equation}
for some constants $c_{\ell,m,\mm}^{\pi,s}\in\RR$ and $M_s\in\ZZ$.
At infinity, there exist coefficients $c_{\ell,m,\mm}^{\pi,\infty}\in\RR$ and $M_\infty\in\ZZ$ such that
\begin{equation}\label{inftyexpansion}
f_G(z)=\sum_{\ell=0}^\VGint\sum_{m,\mm=-\infty}^{M_\infty}c_{\ell,m,\mm}^{\pi,\infty}(\log z\zz)^\ell z^m\zz^\mm\quad\text{if }|z|>1.
\end{equation}

Including the spin structure, the poles at $s=0,1$ are sums of
\begin{equation}\label{term}
[\log((z-s)(\zz-s))]^\ell(z-s)^m(\zz-s)^\mm(\hat z_2-\hat z_s)^{\beta_1}\cdots(\hat z_2-\hat z_s)^{\beta_j},
\end{equation}
with $\hat z_0=0$ and $\beta_i\in\alpha$, $i=1,\ldots,j$.
If $j<|\alpha|$, the term (\ref{term}) is multiplied by factors of $g$ and $\hat z_1$ to form a spin $|\alpha|$ object.

At $z\to s$, $\hat z_2\to\hat z_s$ these terms scale like $\log^\ell(|z-s|^2)|z-s|^{m+\mm+j}$.
In $D$ dimensions, integration over poles is regular if
\begin{equation}
m+\mm+j>-D.
\end{equation}
Note that spin $j>0$ relaxes the condition $m+\mm>-D$ for the regularity of scalar graphical functions.
So, in general, the scalar coefficients of regular graphical functions with spin
can have higher total pole orders $-m-\mm$ than the coefficients of scalar graphical functions.
If this is the case, we cannot naively adopt the algorithm for inverting the scalar effective Laplace
operator. The situation is alleviated if we have to invert $\mybox_{\lambda+j}$
which is the scalar effective Laplace operator in $D+2j$ dimensions (allowing for higher divergences),
but we will see that the problem is only postponed to later steps of calculating $F^\alpha_k(z)$.

At $z=\infty$, the pole order can only increase by including the spin structure
(factors $\hat z_2$). Hence, the coefficients are not more singular than in the scalar
case and no extra attention is necessary.
\begin{ex}\label{expole}
The function
\begin{equation}\label{eqfgen}
\frac{(\hat z_2^{\alpha_1}-\hat z_1^{\alpha_1})\cdots(\hat z_2^{\alpha_k}-\hat z_1^{\alpha_k})}{((z-1)(\zz-1))^n}
\end{equation}
is regular at $z=1$ in $D$ dimensions if $2n-k<D$. In particular,
\begin{equation}\label{eqf1}
\frac{\hat z_2^{\alpha_1}-\hat z_1^{\alpha_1}}{((z-1)(\zz-1))^2}
\end{equation}
is regular in four dimensions although $1/((z-1)(\zz-1))^2$ is singular at $z=1$.
Likewise
\begin{equation}\label{eqf2}
\frac{(\hat z_2^{\alpha_1}-\hat z_1^{\alpha_1})(\hat z_2^{\alpha_2}-\hat z_1^{\alpha_2})(\hat z_2^{\alpha_3}-\hat z_1^{\alpha_3})}{((z-1)(\zz-1))^4}
\end{equation}
is regular for $D=6$ although $1/((z-1)(\zz-1))^4$ has a (non-logarithmic) singularity at $z=1$ in six dimensions.
\end{ex}
For $s=0$ we have $\hat z_s=0$ and the term (\ref{term}) is in an entry of the vector
graphical function which has $j$ factors of $\hat z_2$.
We need to invert $\mybox_{\lambda+j}$, corresponding to dimension $D+2j$, in this sector. Because $m+\mm>-(D+j)\geq-(D+2j)$ the inversion is unique in the space
of scalar graphical functions. We do not need any adjustments at $s=0$.

For $s=1$ the situation is more complex: The term (\ref{term}) populates (with alternating signs) a selection of entries in the vector graphical function.
One entry, e.g., is the coefficient of $\hat z_1^{\beta_1}\cdots\hat z_1^{\beta_j}$ on which $\Delta_\lambda$ needs to be inverted. If $-D\geq m+\mm>-D-j$, the inversion is not unique in the space of scalar graphical functions and therefore ambiguous.

The ambiguity comes from the kernel of the scalar Laplace operator $\mybox_\lambda$ in integer dimensions ($\lambda=1,2,\ldots$).
With the notation of \cite{gfe} (Equations (9) and (10)) we have
\begin{equation}
\mybox_\lambda=\frac1{(z-\zz)^\lambda}\Delta_{\lambda-1}(z-\zz)^\lambda,\quad\Delta_{\lambda-1}=\partial_z\partial_\zz+\frac{\lambda(\lambda-1)}{(z-\zz)^2}.
\end{equation}
(For consistency with \cite{gfe} we use the notation $\Delta_{\lambda-1}$ which must not be confused
with the $D$-dimensional Laplace operator $\Delta_{z_2}$ that was used above.)

The kernel of $\mybox_\lambda$ is hence $(z-\zz)^{-\lambda}$ times the kernel of $\Delta_{\lambda-1}$. The latter was determined in Theorem 33 of \cite{gfe}.
We conclude that
\begin{equation}
\ker\mybox_\lambda=(z-\zz)^{-\lambda}(d_\lambda g(z)+\overline{d_\lambda} h(\zz)),
\end{equation}
where $g(z)$ and $h(\zz)$ are arbitrary holomorphic and anti-holomorphic functions,
respectively. The differential operator $d_\lambda$ is given by ($\overline{d_\lambda}$ is the complex conjugate of $d_\lambda$)
\begin{equation}
d_\lambda=\frac1{(z-\zz)^{\lambda-1}}\sum_{k=0}^{\lambda-1}(-1)^k\frac{(\lambda+k-1)!}
{(\lambda-k-1)!k!}(z-\zz)^{\lambda-k-1}\partial_z^{\lambda-k-1}.
\end{equation}

We find that $(z-\zz)^{-\lambda}d_\lambda$ is homogeneous in $z$ and $\zz$ of degree $-2\lambda+1$.
Poles $((z-1)(\zz-1))^{-k}$ come from functions $h(z)=(z-1)^{-2k+2\lambda-1}$
(and $h(\zz)=-\overline{g(z)}$ by symmetry). For a non-zero result we need $-2k+2\lambda-1<0$
(otherwise no pole is generated by $d_\lambda$). We conclude that poles
in $\ker\mybox_\lambda$ are of order $\geq 2\lambda$. The smallest example is
the function $((z-s)(\zz-s))^{-\lambda}\in\ker\mybox_\lambda$ for all $s\in\CC$.

It is proved in Theorem 5 of \cite{gfe} that the maximum pole orders
in $0$ and $1$ of a scalar graphical function $f_{G_1}(z)$ in Figure \ref{fig:append} is less than $2\lambda$.
(The stronger statement that the pole order is $\leq2\lambda-2$ uses the fact that
a scalar graphical function has even pole order which is not true for a
graphical function with spin, see Example \ref{expole}.)
Because the proof only uses scaling arguments (which are identical in the presence of spin) it carries over
to $f_{G_1}^\alpha(z)$. We thus search
for a regular function $F_{\mathrm{reg}}^\alpha(z)$ in the pre-image of $\mybox_\lambda^\alpha$ which inherits the constraints from the singularity structure
of the graphical function $f_{G_1}^\alpha(z)$.

Assume we generate a term (\ref{term}) in the kernel of $\mybox_\lambda^\alpha$.
The expression (\ref{term}) has a component with $j$ factors $\hat z_2$.
The coefficient of this part must be in the kernel of $\mybox_{\lambda+j}$.
This implies that $-m-\mm\geq2\lambda+2j$. It follows that the pole order
$-m-\mm-j$ of (\ref{term}) is $\geq2\lambda+j\geq2\lambda$.

Because the graphical function has pole order strictly less than $2\lambda$
we can kill the kernel which arises from singularities at $z=1$ by subtracting
all poles of order $\geq2\lambda$.

It is necessary to regularize functions by subtracting poles in $z=1$ at each step before the inversion of $\mybox_{\lambda+j}$ is applied in order to calculate $F^\alpha_j$. This way, the inversion
is well-defined in terms of the algorithm that appends an edge to a scalar graphical function in $D+2j$ dimensions.
The result will behave well on the singularities at $0$ and $\infty$.
The subtraction in the individual steps change the result only by pole terms which are
canceled in the end when all poles of order $\geq 2\lambda$ are subtracted.
Therefore, we need no compensation for these individual subtractions.
This subtraction algorithm generalizes without changes to the regular case in non-integer dimensions.
See Example \ref{exlong} for a detailed illustration.

\begin{ex}\label{short}
An explicit calculation gives for $s=0,1$,
\begin{align}\label{boxd1}
\mybox_\lambda((z-s)(\zz-s))^n&=n(n+\lambda)((z-s)(\zz-s))^{n-1},\\\nonumber
\delta_1((z-s)(\zz-s))^n&=-ns((z-s)(\zz-s))^{n-1}.
\end{align}
We consider the function (\ref{eqf1}) in four dimensions ($\lambda=1$).
We use (\ref{eqF1}) for $f_2(z)=-f_1(z)=((z-1)(\zz-1))^{-2}$ and obtain from (\ref{boxd1})
\begin{align*}
\mybox_2^{-1}f_2(z)&=-((z-1)(\zz-1))^{-1}\\
\delta_1\mybox_2^{-1}f_2(z)&=-((z-1)(\zz-1))^{-2}.
\end{align*}
Hence, in (\ref{eqF1}) we have $f_1(z)-\delta_1\mybox_2^{-1}f_2(z)=0$ which has the unique inverse $0$ (with respect to $\mybox_1$) in the space of graphical functions.
We obtain
$$
F^{\alpha_1}(z)=-\frac{\hat z_2^{\alpha_1}}{(z-1)(\zz-1)}
$$
which has a pole of order $2$ at $z=1$. We expand $F^{\alpha_1}(z)$ at $z=1$ yielding
$$
F^{\alpha_1}(z)=-\frac{\hat z_2^{\alpha_1}-\hat z_1^{\alpha_1}}{(z-1)(\zz-1)}-\frac{\hat z_1^{\alpha_1}}{(z-1)(\zz-1)}.
$$
The first term has pole order $1$ while the second term is a pole term of order $2=2\lambda$ (which we subtract). The regular solution is
$$
F^{\alpha_1}_{\mathrm{reg}}(z)=-\frac{\hat z_2^{\alpha_1}-\hat z_1^{\alpha_1}}{(z-1)(\zz-1)}.
$$
\end{ex}

\begin{ex}\label{exlong}
Consider the spin $3$ graphical function in (\ref{eqf2}) in six dimensions ($\lambda=2$).
We have (see Equation (\ref{eqfz2}))
$$
f_3^\alpha(z)=\frac{\hat z_2^{\alpha_1}\hat z_2^{\alpha_2}\hat z_2^{\alpha_3}}{((z-1)(\zz-1))^4}
$$
and obtain (see Equations (\ref{eqFz2}) and (\ref{boxd1}))
$$
F_3^\alpha(z)=-\frac{\hat z_2^{\alpha_1}\hat z_2^{\alpha_2}\hat z_2^{\alpha_3}}{6((z-1)(\zz-1))^3}.
$$
From Equation (\ref{eqFf1}) we get
\begin{align*}
g_3^\alpha(z)&=\frac{-\hat z_1^{\alpha_1}\hat z_1^{\alpha_2}\hat z_1^{\alpha_3}
+\hat z_1^{\alpha_1}\hat z_1^{\alpha_2}\hat z_2^{\alpha_3}+\hat z_1^{\alpha_1}\hat z_2^{\alpha_2}\hat z_1^{\alpha_3}+\hat z_2^{\alpha_1}\hat z_1^{\alpha_2}\hat z_1^{\alpha_3}-\frac12(\hat z_1^{\alpha_1}\hat z_2^{\alpha_2}\hat z_2^{\alpha_3}+\hat z_2^{\alpha_1}\hat z_1^{\alpha_2}\hat z_2^{\alpha_3}+\hat z_2^{\alpha_1}\hat z_2^{\alpha_2}\hat z_1^{\alpha_3})
}{((z-1)(\zz-1))^4}\\
&+\frac{g^{\alpha_1,\alpha_2}\hat z_2^{\alpha_3}+g^{\alpha_1,\alpha_3}\hat z_2^{\alpha_2}+g^{\alpha_2,\alpha_3}\hat z_2^{\alpha_1}}{12((z-1)(\zz-1))^3}.
\end{align*}
We read off
$$
f_2^\alpha(z)=-\frac{\hat z_1^{\alpha_1}\hat z_2^{\alpha_2}\hat z_2^{\alpha_3}+\hat z_2^{\alpha_1}\hat z_1^{\alpha_2}\hat z_2^{\alpha_3}+\hat z_2^{\alpha_1}\hat z_2^{\alpha_2}\hat z_1^{\alpha_3}}{2((z-1)(\zz-1))^4}.
$$
yielding
$$
F_2^\alpha(z)=\frac{\hat z_1^{\alpha_1}\hat z_2^{\alpha_2}\hat z_2^{\alpha_3}+\hat z_2^{\alpha_1}\hat z_1^{\alpha_2}\hat z_2^{\alpha_3}+\hat z_2^{\alpha_1}\hat z_2^{\alpha_2}\hat z_1^{\alpha_3}}{6((z-1)(\zz-1))^3}.
$$
With $F_2^\alpha$ we calculate $g_2^\alpha(z)$:
\begin{align*}
g_2^\alpha(z)&=-\frac{\hat z_1^{\alpha_1}\hat z_1^{\alpha_2}\hat z_1^{\alpha_3}}{((z-1)(\zz-1))^4}
+\frac{g^{\alpha_1,\alpha_2}(\hat z_2^{\alpha_3}-\hat z_1^{\alpha_3})+g^{\alpha_1,\alpha_3}(\hat z_2^{\alpha_2}-\hat z_1^{\alpha_2})+g^{\alpha_2,\alpha_3}(\hat z_2^{\alpha_1}-\hat z_1^{\alpha_1})}{12((z-1)(\zz-1))^3}.
\end{align*}
Now,
$$
f_1^\alpha(z)=\frac{g^{\alpha_1,\alpha_2}\hat z_2^{\alpha_3}+g^{\alpha_1,\alpha_3}\hat z_2^{\alpha_2}+g^{\alpha_2,\alpha_3}\hat z_2^{\alpha_1}}{12((z-1)(\zz-1))^3}
$$
and
$$
F_1^\alpha(z)=-\frac{g^{\alpha_1,\alpha_2}\hat z_2^{\alpha_3}+g^{\alpha_1,\alpha_3}\hat z_2^{\alpha_2}+g^{\alpha_2,\alpha_3}\hat z_2^{\alpha_1}}{24((z-1)(\zz-1))^2}
$$
leading to
\begin{align*}
g_1^\alpha(z)&=-\frac{\hat z_1^{\alpha_1}\hat z_1^{\alpha_2}\hat z_1^{\alpha_3}}{((z-1)(\zz-1))^4}.
\end{align*}
This is a pure pole term in $6$ dimensions. So, after subtraction,
$$
f_0^\alpha(z)=F_0^\alpha(z)=g_0^\alpha(z)=0.
$$
We expand $F^\alpha(z)=F_3^\alpha(z)+F_2^\alpha(z)+F_1^\alpha(z)+F_0^\alpha(z)$
in $z=1$, $\hat z_2=\hat z_1$, yielding
\begin{align*}
F^\alpha&(z)=-\frac{(\hat z_2^{\alpha_1}\!-\!\hat z_1^{\alpha_1})(\hat z_2^{\alpha_2}\!-\!\hat z_1^{\alpha_2})(\hat z_2^{\alpha_3}\!-\!\hat z_1^{\alpha_3})}{6((z-1)(\zz-1))^3}-
\frac{g^{\alpha_1,\alpha_2}(\hat z_2^{\alpha_3}\!-\!\hat z_1^{\alpha_3})+g^{\alpha_1,\alpha_3}(\hat z_2^{\alpha_2}\!-\!\hat z_1^{\alpha_2})+g^{\alpha_2,\alpha_3}(\hat z_2^{\alpha_1}\!-\!\hat z_1^{\alpha_1})}{24((z-1)(\zz-1))^2}\\
&+\frac{\hat z_1^{\alpha_1}\hat z_1^{\alpha_2}(\hat z_2^{\alpha_3}\!-\!\hat z_1^{\alpha_3})
+\hat z_1^{\alpha_1}\hat z_1^{\alpha_3}(\hat z_2^{\alpha_2}\!-\!\hat z_1^{\alpha_2})+
\hat z_1^{\alpha_2}\hat z_1^{\alpha_3}(\hat z_2^{\alpha_1}\!-\!\hat z_1^{\alpha_1})
}{6((z-1)(\zz-1))^3}+\frac{g^{\alpha_1,\alpha_2}\hat z_1^{\alpha_3}+g^{\alpha_1,\alpha_3}\hat z_1^{\alpha_2}+g^{\alpha_2,\alpha_3}\hat z_1^{\alpha_1}}{24((z-1)(\zz-1))^2}.
\end{align*}
The first two terms have pole order $3<2\lambda=4$.
They are consistent with the theory of graphical functions. Terms three and four
are pure poles in $z=1$ of orders $5$ and $4$, respectively.
They are killed by subtraction. The regular solution is given by the first two terms,
\begin{equation}\label{eqresultlong}
F_{\mathrm{reg}}^\alpha(z)=-\frac{(\hat z_2^{\alpha_1}\!-\!\hat z_1^{\alpha_1})(\hat z_2^{\alpha_2}\!-\!\hat z_1^{\alpha_2})(\hat z_2^{\alpha_3}\!-\!\hat z_1^{\alpha_3})}{6((z-1)(\zz-1))^3}-
\frac{g^{\alpha_1,\alpha_2}(\hat z_2^{\alpha_3}\!-\!\hat z_1^{\alpha_3})+g^{\alpha_1,\alpha_3}(\hat z_2^{\alpha_2}\!-\!\hat z_1^{\alpha_2})+g^{\alpha_2,\alpha_3}(\hat z_2^{\alpha_1}\!-\!\hat z_1^{\alpha_1})}{24((z-1)(\zz-1))^2}.
\end{equation}
It is possible to derive a result for the general function (\ref{eqfgen}) in any dimension $D$.
\end{ex}

\subsection{The log-divergent case}
If, after appending an edge, the Feynman integral has a logarithmic divergence, we cannot proceed as in Section \ref{sectappint}. One needs to subtract
the divergence taking into account the exact scaling behavior; an expansion in $\epsilon$
is not sufficient. From (\ref{convolution}) we, e.g., obtain for a pole of order $D+2a\epsilon$ in a scalar
propagator,
\begin{equation}
\int_{\RR^D}Q_{1+\frac{1+a\epsilon}\lambda}(x,y)Q_1(y,z)\frac{\dd^D y}{\pi^{D/2}}=-\frac{Q_{1+\frac{a\epsilon}\lambda}(x,z)}{a\epsilon(\lambda+a\epsilon)\Gamma(\lambda)}.
\end{equation}
An expansion in $\epsilon$ of the singular propagator gives
\begin{equation}
Q_{1+\frac{1+a\epsilon}\lambda}(x,y)=\frac1{\|x-y\|^D}(1-2a\epsilon\log\|x-y\|)+O(\epsilon^2).
\end{equation}
The coefficient of the pole in $\epsilon$ depends on the constant $a$ which determines the scaling behavior of the singularity. In the expansion of the propagator the parameter $a$ appears at order $\epsilon^1$
which may mix with other terms $\sim\|x-y\|^{-D}\log\|x-y\|$.

\begin{figure}
\centering
\includegraphics{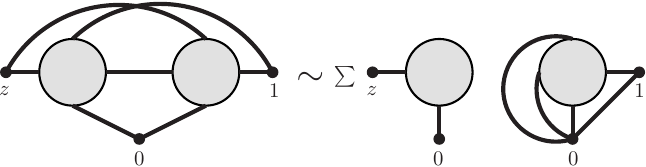}
\caption{The asymptotic expansion of graphical functions at $z=0$. Bold lines stand for sets of edges.}
\label{fig:asympt}
\end{figure}
The solution to this problem is to extract the exact scaling behavior of the singularity in terms of two-point functions as depicted in Figure \ref{fig:asympt} for a singularity at $z=0$.
The asymptotic expansions at $0$, $1$, and $\infty$ only rely on scaling arguments. They lift to graphical functions with numerator structure.
The result is related to the momentum space method of expansion by regions
(see \cite{Exbyreg} and the references therein).

The mathematical statement (a well-tested conjecture) is as follows \cite{numfunct,7loops}.
Assume $G$ is a graph such that the graphical function $f_G^\alpha(z)$ exists. Let $\sV^{\mathrm{int}}_G$ and $\sV^{\mathrm{ext}}_G=\{0,1,z\}$ be the sets of internal and external vertices of $G$,
respectively.  For $V\subseteq\sV^{\mathrm{int}}_G$ let $G[V]$ be the subgraph of $G$ which is induced by $V$, i.e.\ the subgraph which contains the vertices $V$ and all edges of $G$ with both
vertices in $V$. Further, let $G[V=0]$ be the contracted graph $G/G[V]$ where all vertices in $V$ are identified with the vertex $0$. Then, we obtain the asymptotic expansions at $z=0$ by
\begin{equation}\label{0asympt}
f_G^\alpha(z)=\sum_{V\subseteq\sV^{\mathrm{int}}_G}f_{G[V\cup\{0,z\}]}^\alpha(z)f_{G[V\cup\{0,z\}=0]}^\alpha(1+O(|z|))
\end{equation}
whenever the right-hand side exists. Note that the right-hand side of (\ref{0asympt}) only has two-point graphs.

The situation at $z=1$ is analogous,
\begin{equation}\label{1asympt}
f_G^\alpha(z)=\sum_{V\subseteq\sV^{\mathrm{int}}}f_{G[V\cup\{1,z\}]}^\alpha(z)f_{G[V\cup\{1,z\}=1]}^\alpha(1+O(|z-1|)).
\end{equation}

In case of a logarithmic singularity at infinity one has to subtract the asymptotic expansion
\begin{equation}\label{inftyasympt}
f_G^\alpha(z)=\sum_{V\subseteq\sV^{\mathrm{int}}}f_{G[V\cup\{0,1\}]}^\alpha f_{G[V\cup\{0,1\}=0]}^\alpha(z)(1+O(|z|^{-1})).
\end{equation}

Subtraction of the asymptotic formulae eliminate logarithmic singularities. Appending the edge to the subtractions (which are two-point functions) is a convolution that can be calculated analytically
using (\ref{convolution}). Details are analogous to the scalar case which is explained in detail in \cite{7loops}.

Note that in renormalizable QFTs, higher (non-logarithmic) divergences only appear in two-point insertions which can be eliminated with the method of Section \ref{secttwopt}. Thereafter, a physical graph
only has logarithmic singularities.

\subsection{Appending an edge with spin or with weight $\neq1$}\label{sectappgen}
Let $D=2\lambda+2=4+2n-\epsilon$, $n=0,1,\ldots$.

In this section we append an edge $Q^\alpha_\nu$ with spin $\alpha$ to the vertex $z$ of a graph $G$, thus creating
a new graph $G_1$ whose vertex $z$ connects to the vertex $z$ in $G$, see Figure \ref{fig:append}. This is only possible (with the method of this section)
for certain weights $\nu$. The range of all possible weights
is $\nu=1-k/\lambda+|\alpha|/2\lambda$ for $k=0,1,\ldots,n+1$.

The case $\nu=1$, $\alpha=\emptyset$ appends a scalar edge of weight $1$. The spin is only in the graphical function $f_G^\alpha(z)$. In this setup, appending an edge is equivalent to solving the differential
equation (\ref{eqF1fromf}). We have
\begin{equation}
f_{G_1}^\alpha(z)=-\frac1{\Gamma(\lambda)}(\mybox^\alpha_\lambda)^{-1}f_G^\alpha(z).
\end{equation}
An algorithm to determine $(\mybox^\alpha_\lambda)^{-1}f_G^\alpha(z)$ is explained in the previous subsections.
The result is unique in the space of graphical functions.

Next, we consider the scalar case of weight $\nu=1-k/\lambda$, $k=1,\ldots,n+1$.
We express the appended edge as a string of $k+1$ edges of weight $1$, see \cite{gfe}. Repeated use of
(\ref{convolution}) gives (see the proof of Proposition 39 in \cite{gfe})
\begin{align}
\label{eqappendk}
\begin{tikzpicture}[baseline={([yshift=-2.5ex]current bounding box.center)}]
    \coordinate (v1) at (0,0);
    \coordinate (v2) at (1.5,0);
    \filldraw[black!20] (v1) -- ([shift=(150:.2)]v1) arc (150:210:.2) -- (v1);
    \draw (v1) -- ([shift=(150:.2)]v1);
    \draw (v1) -- ([shift=(210:.2)]v1);
    \filldraw[black!20] (v2) -- ([shift=(30:.2)]v2) arc (30:-30:.2) -- (v2);
    \draw (v2) -- ([shift=(30:.2)]v2);
    \draw (v2) -- ([shift=(-30:.2)]v2);
    \filldraw (v1) circle (1.3pt);
    \filldraw (v2) circle (1.3pt);
    \draw (v1) -- node[above] {$1-\frac{k}{\lambda}$} (v2);
\end{tikzpicture}
\quad
&=
\quad
\frac{\displaystyle \Gamma(\lambda)^{k+1}k!}{\displaystyle\Gamma(\lambda-k)}
\quad
\underbrace{
\begin{tikzpicture}[baseline={([yshift=-2.5ex]current bounding box.center)}]
    \coordinate (v1) at (0,0);
    \coordinate (v2) at (1.5,0);
    \coordinate (v3) at (3,0);
    \coordinate (v3a) at (3.5,0);
    \node (v4) at (4.5,0) {$\cdots$};
    \coordinate (v5a) at (5.5,0);
    \coordinate (v5) at (6,0);
    \coordinate (v6) at (7.5,0);
    \filldraw[black!20] (v1) -- ([shift=(150:.2)]v1) arc (150:210:.2) -- (v1);
    \draw (v1) -- ([shift=(150:.2)]v1);
    \draw (v1) -- ([shift=(210:.2)]v1);
    \filldraw[black!20] (v6) -- ([shift=(30:.2)]v6) arc (30:-30:.2) -- (v6);
    \draw (v6) -- ([shift=(30:.2)]v6);
    \draw (v6) -- ([shift=(-30:.2)]v6);
    \filldraw (v1) circle (1.3pt);
    \filldraw (v2) circle (1.3pt);
    \filldraw (v3) circle (1.3pt);
    \filldraw (v5) circle (1.3pt);
    \filldraw (v6) circle (1.3pt);
    \draw[white] (v1) -- node[above] {$\phantom{\frac{k}{\lambda}}$} (v6);
    \draw (v1) -- node[above] {$1$} (v2) -- node[above] {$1$} (v3) -- (v3a);
    \draw (v5a) -- (v5) -- node[above] {$1$} (v6);
\end{tikzpicture}
}_{k + 1}.
\end{align}

The string of $k+1$ edges of weight $1$
can be appended by repeatedly inverting $\mybox^\alpha_\lambda$. It is not easily possible to append more than $n+2$ edges because one runs into non-logarithmic infrared singularities
for $k\geq n+2$.

If the appended propagator has spin $\alpha\neq\emptyset$, we append a scalar edge of
weight $\nu-|\alpha|/2\lambda$ (if possible) and differentiate the appended edge with
respect to $z_2$ using (\ref{pz2}).
Note that this procedure can produce extra singularities in intermediate steps so that even
a convergent graphical function may need to be dimensionally regularized.

\subsection{A simple test and first benchmarks}
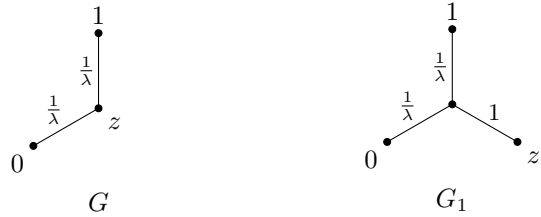
\begin{figure}
\begin{align*}
& 
\begin{tikzpicture}
    \coordinate[label=below right:$z$] (v) ;
    \def \rad {1}
    \coordinate[label=above:$1$] (v1) at ([shift=(90:\rad)]v);
    \coordinate[label=below left:$0$] (v2) at ([shift=(210:\rad)]v);
%
    \draw (v) -- node[inner sep=1pt,left] {\small{$\frac1\lambda$}} (v1);
    \draw (v) -- node[inner sep=1pt,above left] {\small{$\frac1\lambda$}} (v2);
%
    \filldraw (v) circle (1.3pt);
    \filldraw (v1) circle (1.3pt);
    \filldraw (v2) circle (1.3pt);
%
    \node [below=of v] {$G$};
\end{tikzpicture}
\hspace{3cm}
\begin{tikzpicture}
    \coordinate (v) ;
    \def \rad {1}
    \coordinate[label=above:$1$] (v1) at ([shift=(90:\rad)]v);
    \coordinate[label=below left:$0$] (v2) at ([shift=(210:\rad)]v);
    \coordinate[label=below right:$z$] (v3) at ([shift=(330:\rad)]v);
    \draw (v) -- node[inner sep=1pt,left] {\small{$\frac1\lambda$}} (v1);
    \draw (v) -- node[inner sep=1pt,above left] {\small{$\frac1\lambda$}} (v2);
    \draw (v) -- node[inner sep=1pt,above right] {\small{$1$}} (v3);
    \filldraw (v) circle (1.3pt);
    \filldraw (v1) circle (1.3pt);
    \filldraw (v2) circle (1.3pt);
    \filldraw (v3) circle (1.3pt);
    \node [below=of v] {$G_1$};
\end{tikzpicture}
\end{align*}
\caption{
Constructing the three-star (right) in $D=2\lambda+2$ dimensions by appending an edge to the two-star (left). The weights are as indicated.
}
\label{fig:3star}
\end{figure}

To test the method of appending an edge we applied it to the graph in Figure \ref{fig:3star}.
The two-star is rational
\begin{equation}
f_G(z)=\frac1{z\zz(z-1)(\zz-1)}.
\end{equation}
The three-star is easily calculated with \cite{Shlog} by appending an edge to the scalar graph
$G$. In four dimensions it contains a Bloch-Wigner dilogarithm \cite{gf}.
We want to obtain the graphical functions with spin $k_0+k_1$ by taking
$k_0$ derivatives with respect to $z_0$ and $k_1$ derivatives with respect to $z_1$ using
Equations (\ref{pz2}), (\ref{pz1}), and (\ref{pz0}).
Because each differentiation increases the pole order by one,
the graphical function is regular if $k_0,k_1\leq2\lambda-1$.

We do this in two different ways. Firstly, we take derivatives of the three-star itself.
Secondly, we take derivatives of the two-star and append an edge to the vertex $z$.
Both methods have to give the same result. This is checked for all configurations and
orders of $\epsilon$. We stopped the calculations if it takes one day on a single
core of an office PC. The memory demand in these cases is modest ($\approx$1GB).
We did this for three different orders in $\epsilon$ where we reached
dimension $10$ and spin $7$ for order $\epsilon^0$,
dimension $10$ and spin $5$ for order $\epsilon^2$,
and dimension $8$ and spin $4$ for order $\epsilon^4$.

\section{Integration over $z$}\label{intz}
\begin{figure}
\centering
\includegraphics{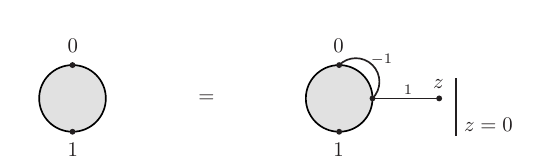}
\caption{Integration over $z$ by appending an edge of weight $1$.}
\label{fig:int}
\end{figure}
There exist three options for the transition from graphical functions to two-point functions and to scalar Feynman integrals.

Firstly, one can specify the external vertex $z$ to $0$, $1$, or $\infty$,
which transforms a three-point amplitude into a two-point amplitude. This simple method was used to calculate the zigzag integrals in \cite{ZZ}.

Secondly, one can integrate over $z$ in integer dimensions using a residue theorem which was developed in \cite{gf}. In non-integer dimensions,
the residue theorem cannot yet be used. A generalization to non-integer dimensions needs to control functions with powers of $\log|z-\zz|$ which are not GSVHs because they
are singular on the entire real axis. A method that handles this difficulty has not yet been developed.

Thirdly, one can integrate over $z$ by the following practical method.
One can add a scalar edge of weight $-1$ between the external vertices $0$ and $z$. Then, one appends a scalar edge of weight $1$ to $z$.
Finally, setting $z=0$ gives the integral of the original graph over $z$, see Figure \ref{fig:int}. This method is fully general but more time-consuming than the residue
theorem in integer dimension. It seems inefficient to calculate a 
graphical function in the intermediate step only to specialize it to $z=0$.
For scalar graphs, however, integration over $z$ was typically not the bottleneck of the calculations.
The situation is somewhat worse with non-zero spin (depending an the size of $|\alpha|$), so that ultimately
it may be useful to generalize the second method that uses the residue theorem.

\section{Constructible graphs}\label{construct}
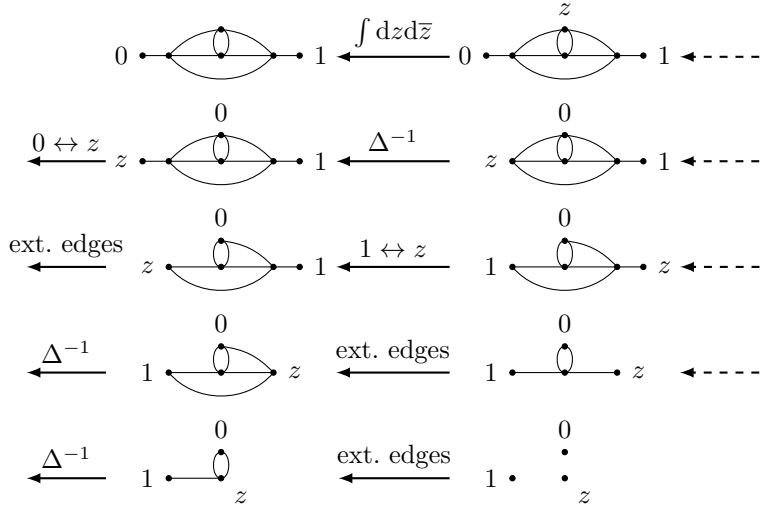
\begin{figure}
\centering
\begin{tikzpicture}[x=2.3ex,y=2.3ex]

    \begin{scope}[local bounding box=fullcateye]
        \coordinate (vm);
        \coordinate [above=1 of vm] (vo);
        \coordinate [left =2 of vm] (vl);
        \coordinate [right=2 of vm] (vr);
        \coordinate [left =1 of vl] (vll);
        \coordinate [right=1 of vr] (vrr);

        \draw (vm) to [bend left =90] (vo);
        \draw (vm) to [bend right=90] (vo);
        \draw (vl) to [bend left =20] (vo);
        \draw (vr) to [bend right=20] (vo);
        \draw (vl) to [bend right=50] (vr);

        \draw (vl) -- (vm);
        \draw (vr) -- (vm);
        \draw (vll) -- (vl);
        \draw (vrr) -- (vr);

        \node [left =.2 of vll] {$0$};
        \node [right=.2 of vrr] {$1$};

        \filldraw (vm) circle (1pt);
        \filldraw (vo) circle (1pt);
        \filldraw (vl) circle (1pt);
        \filldraw (vr) circle (1pt);
        \filldraw (vll) circle (1pt);
        \filldraw (vrr) circle (1pt);
    \end{scope}

    \begin{scope}[xshift=130,local bounding box=fullcateyez]
        \coordinate (vm);
        \coordinate [above=1 of vm] (vo);
        \coordinate [left =2 of vm] (vl);
        \coordinate [right=2 of vm] (vr);
        \coordinate [left =1 of vl] (vll);
        \coordinate [right=1 of vr] (vrr);

        \draw (vm) to [bend left =90] (vo);
        \draw (vm) to [bend right=90] (vo);
        \draw (vl) to [bend left =20] (vo);
        \draw (vr) to [bend right=20] (vo);
        \draw (vl) to [bend right=50] (vr);

        \draw (vl) -- (vm);
        \draw (vr) -- (vm);
        \draw (vll) -- (vl);
        \draw (vrr) -- (vr);

        \node [left =.2 of vll] {$0$};
        \node [right=.2 of vrr] {$1$};
        \node [above=.2 of vo ] {$z$};

        \filldraw (vm) circle (1pt);
        \filldraw (vo) circle (1pt);
        \filldraw (vl) circle (1pt);
        \filldraw (vr) circle (1pt);
        \filldraw (vll) circle (1pt);
        \filldraw (vrr) circle (1pt);
    \end{scope}

    \draw[-latex,thick] (fullcateye-|fullcateyez.west) -- (fullcateye) node[midway,above]{$\int \dd z \dd \zz$};
    \draw[-latex,thick,dashed]  ([xshift=30]fullcateye-|fullcateyez.east) -- (fullcateye-|fullcateyez.east)  node[midway,above]{};

    \begin{scope}[yshift=-40,local bounding box=fullcateyez2]
        \coordinate (vm);
        \coordinate [above=1 of vm] (vo);
        \coordinate [below=1 of vm] (vu);
        \coordinate [left =2 of vm] (vl);
        \coordinate [right=2 of vm] (vr);
        \coordinate [left =1 of vl] (vll);
        \coordinate [right=1 of vr] (vrr);

        \draw (vm) to [bend left =90] (vo);
        \draw (vm) to [bend right=90] (vo);
        \draw (vl) to [bend left =20] (vo);
        \draw (vr) to [bend right=20] (vo);
        \draw (vl) to [bend right=50] (vr);

        \draw (vl) -- (vm);
        \draw (vr) -- (vm);
        \draw (vll) -- (vl);
        \draw (vrr) -- (vr);

        \node [left =.2 of vll] {$z$};
        \node [right=.2 of vrr] {$1$};
        \node [above=.2 of vo ] {$0$};

        \node [below=.2 of vu ] {\phantom{$0$}};

        \filldraw (vm) circle (1pt);
        \filldraw (vo) circle (1pt);
        \filldraw (vl) circle (1pt);
        \filldraw (vr) circle (1pt);
        \filldraw (vll) circle (1pt);
        \filldraw (vrr) circle (1pt);
    \end{scope}

    \draw[-latex,thick]  (fullcateyez2) -- ([xshift=-30]fullcateyez2-|fullcateyez2.west)  node[midway,above]{$0 \leftrightarrow z$};

    \begin{scope}[yshift=-40,xshift=130,local bounding box=cateyeredl]
        \coordinate (vm);
        \coordinate [above=1 of vm] (vo);
        \coordinate [below=1 of vm] (vu);
        \coordinate [left =2 of vm] (vl);
        \coordinate [right=2 of vm] (vr);
        \coordinate [left =1 of vl] (vll);
        \coordinate [right=1 of vr] (vrr);

        \draw (vm) to [bend left =90] (vo);
        \draw (vm) to [bend right=90] (vo);
        \draw (vl) to [bend left =20] (vo);
        \draw (vr) to [bend right=20] (vo);
        \draw (vl) to [bend right=50] (vr);

        \draw (vl) -- (vm);
        \draw (vr) -- (vm);
        \draw (vrr) -- (vr);

        \node [left =.2 of vl] {$z$};
        \node [right=.2 of vrr] {$1$};
        \node [above=.2 of vo ] {$0$};
        \node [below=.2 of vu ] {\phantom{$0$}};
        \node [right=.2 of vrr ] {\phantom{$0$}};
        \node [left=.2 of vll ] {\phantom{$0$}};

        \filldraw (vm) circle (1pt);
        \filldraw (vo) circle (1pt);
        \filldraw (vl) circle (1pt);
        \filldraw (vr) circle (1pt);
        \filldraw (vrr) circle (1pt);
    \end{scope}

    \draw[-latex,thick] (fullcateyez2-|cateyeredl.west) -- (fullcateyez2) node[midway,above]{$\Delta^{-1}$};
    \draw[-latex,thick,dashed]  ([xshift=30]fullcateyez2-|cateyeredl.east) -- (fullcateyez2-|cateyeredl.east)  node[midway,above]{};

    \begin{scope}[yshift=-80,xshift=0,local bounding box=cateyeredl2]
        \coordinate (vm);
        \coordinate [above=1 of vm] (vo);
        \coordinate [below=1 of vm] (vu);
        \coordinate [left =2 of vm] (vl);
        \coordinate [right=2 of vm] (vr);
        \coordinate [left =1 of vl] (vll);
        \coordinate [right=1 of vr] (vrr);

        \draw (vm) to [bend left =90] (vo);
        \draw (vm) to [bend right=90] (vo);
        \draw (vr) to [bend right=20] (vo);
        \draw (vl) to [bend right=50] (vr);

        \draw (vl) -- (vm);
        \draw (vr) -- (vm);
        \draw (vrr) -- (vr);

        \node [left =.2 of vl] {$z$};
        \node [right=.2 of vrr] {$1$};
        \node [above=.2 of vo ] {$0$};
        \node [below=.2 of vu ] {\phantom{$0$}};
        \node [right=.2 of vrr ] {\phantom{$0$}};
        \node [left=.2 of vll ] {\phantom{$0$}};

        \filldraw (vm) circle (1pt);
        \filldraw (vo) circle (1pt);
        \filldraw (vl) circle (1pt);
        \filldraw (vr) circle (1pt);
        \filldraw (vrr) circle (1pt);
    \end{scope}

    \draw[-latex,thick]  (cateyeredl2) -- ([xshift=-30]cateyeredl2-|cateyeredl2.west)  node[midway,above]{\text{ext.~edges}};

    \begin{scope}[yshift=-80,xshift=130,local bounding box=cateyeredl3]
        \coordinate (vm);
        \coordinate [above=1 of vm] (vo);
        \coordinate [below=1 of vm] (vu);
        \coordinate [left =2 of vm] (vl);
        \coordinate [right=2 of vm] (vr);
        \coordinate [left =1 of vl] (vll);
        \coordinate [right=1 of vr] (vrr);

        \draw (vm) to [bend left =90] (vo);
        \draw (vm) to [bend right=90] (vo);
        \draw (vr) to [bend right=20] (vo);
        \draw (vl) to [bend right=50] (vr);

        \draw (vl) -- (vm);
        \draw (vr) -- (vm);
        \draw (vrr) -- (vr);

        \node [left =.2 of vl] {$1$};
        \node [right=.2 of vrr] {$z$};
        \node [right=.2 of vrr ] {\phantom{$0$}};
        \node [left=.2 of vll ] {\phantom{$0$}};
        \node [above=.2 of vo ] {$0$};
        \node [below=.2 of vu ] {\phantom{$0$}};

        \filldraw (vm) circle (1pt);
        \filldraw (vo) circle (1pt);
        \filldraw (vl) circle (1pt);
        \filldraw (vr) circle (1pt);
        \filldraw (vrr) circle (1pt);
    \end{scope}

    \draw[-latex,thick]  (cateyeredl2-|cateyeredl3.west) -- (cateyeredl2)  node[midway,above]{$1 \leftrightarrow z$};

    \draw[-latex,thick,dashed]  ([xshift=30]cateyeredl3-|cateyeredl3.east) -- (cateyeredl3)  node[midway,above]{};

    \begin{scope}[yshift=-120,local bounding box=cateyeredlr]
        \coordinate (vm);
        \coordinate [above=1 of vm] (vo);
        \coordinate [below=1 of vm] (vu);
        \coordinate [left =2 of vm] (vl);
        \coordinate [right=2 of vm] (vr);
        \coordinate [left =1 of vl] (vll);
        \coordinate [right=1 of vr] (vrr);

        \draw (vm) to [bend left =90] (vo);
        \draw (vm) to [bend right=90] (vo);
        \draw (vr) to [bend right=20] (vo);
        \draw (vl) to [bend right=50] (vr);

        \draw (vl) -- (vm);
        \draw (vr) -- (vm);

        \node [left =.2 of vl] {$1$};
        \node [right=.2 of vr] {$z$};
        \node [above=.2 of vo ] {$0$};
        \node [below=.2 of vu ] {\phantom{$0$}};
        \node [right=.2 of vrr ] {\phantom{$0$}};
        \node [left=.2 of vll ] {\phantom{$0$}};

        \filldraw (vm) circle (1pt);
        \filldraw (vo) circle (1pt);
        \filldraw (vl) circle (1pt);
        \filldraw (vr) circle (1pt);
    \end{scope}

    \draw[-latex,thick]  (cateyeredlr) -- ([xshift=-30]cateyeredlr-|cateyeredlr.west)  node[midway,above]{$\Delta^{-1}$};

    \begin{scope}[yshift=-120,xshift=130,local bounding box=cateyeredlr2]
        \coordinate (vm);
        \coordinate [above=1 of vm] (vo);
        \coordinate [below=1 of vm] (vu);
        \coordinate [left =2 of vm] (vl);
        \coordinate [right=2 of vm] (vr);
        \coordinate [left =1 of vl] (vll);
        \coordinate [right=1 of vr] (vrr);

        \draw (vm) to [bend left =90] (vo);
        \draw (vm) to [bend right=90] (vo);

        \draw (vl) -- (vm);
        \draw (vr) -- (vm);

        \node [left =.2 of vl] {$1$};
        \node [right=.2 of vr] {$z$};
        \node [above=.2 of vo ] {$0$};
        \node [below=.2 of vu ] {\phantom{$0$}};
        \node [right=.2 of vrr ] {\phantom{$0$}};
        \node [left=.2 of vll ] {\phantom{$0$}};

        \filldraw (vm) circle (1pt);
        \filldraw (vo) circle (1pt);
        \filldraw (vl) circle (1pt);
        \filldraw (vr) circle (1pt);
    \end{scope}

    \draw[-latex,thick]  (cateyeredlr-|cateyeredlr2.west) -- (cateyeredlr)  node[midway,above]{\text{ext.~edges}};
    \draw[-latex,thick,dashed]  ([xshift=30]cateyeredlr2-|cateyeredlr2.east) -- (cateyeredlr2)  node[midway,above]{};

    \begin{scope}[yshift=-160,xshift=0,local bounding box=cateyeredlr3]
        \coordinate (vm);
        \coordinate [above=1 of vm] (vo);
        \coordinate [below=1 of vm] (vu);
        \coordinate [left =2 of vm] (vl);
        \coordinate [right=2 of vm] (vr);
        \coordinate [left =1 of vl] (vll);
        \coordinate [right=1 of vr] (vrr);

        \draw (vm) to [bend left =90] (vo);
        \draw (vm) to [bend right=90] (vo);

        \draw (vl) -- (vm);

        \node [left =.2 of vl] {$1$};
        \node [below right=.2 of vm] {$z$};
        \node [above=.2 of vo ] {$0$};
        \node [below=.2 of vu ] {\phantom{$0$}};
        \node [right=.2 of vrr ] {\phantom{$0$}};
        \node [left=.2 of vll ] {\phantom{$0$}};

        \filldraw (vm) circle (1pt);
        \filldraw (vo) circle (1pt);
        \filldraw (vl) circle (1pt);
    \end{scope}

    \draw[-latex,thick]  (cateyeredlr3) -- ([xshift=-30]cateyeredlr3-|cateyeredlr3.west)  node[midway,above]{$\Delta^{-1}$};

    \begin{scope}[yshift=-160,xshift=130,local bounding box=one]
        \coordinate (vm);
        \coordinate [above=1 of vm] (vo);
        \coordinate [below=1 of vm] (vu);
        \coordinate [left =2 of vm] (vl);
        \coordinate [right=2 of vm] (vr);
        \coordinate [left =1 of vl] (vll);
        \coordinate [right=1 of vr] (vrr);

        \node [left =.2 of vl] {$1$};
        \node [below right=.2 of vm] {$z$};
        \node [above=.2 of vo ] {$0$};
        \node [below=.2 of vu ] {\phantom{$0$}};
        \node [right=.2 of vrr ] {\phantom{$0$}};
        \node [left=.2 of vll ] {\phantom{$0$}};

        \filldraw (vm) circle (1pt);
        \filldraw (vo) circle (1pt);
        \filldraw (vl) circle (1pt);
    \end{scope}

    \draw[-latex,thick]  (one) -- ([xshift=-42]one-|one.west)  node[midway,above]{\text{ext.~edges}};
\end{tikzpicture}
\caption{Example of the graphical reduction of a two-point function in $\phi^4$ theory.
The two-point function can be calculated by following the arrows starting from the trivial graphical function which is equal to $1$. The symbol $\Delta^{-1}$ means appending an edge
and $\int\dd z\dd\zz$ is integration over $z$.}
\label{fig:GFreduction}
\end{figure}
We combine the results of the previous sections to define a class of graphs with spin whose graphical functions in even dimensions $\geq4$ can be computed
(subject to constraints from time and memory consumption).

Graphical functions depend on the position of the external labels $0$, $1$, $z$.
A permutation of external vertices changes the graphical function in an explicit way, see Section \ref{sectpermute}. To decide whether a graphical function is computable
or not, the distinction between internal and external labels suffices;
the assignment of $0$, $1$, $z$ to the external labels is insignificant.
We may hence forget the labels on external vertices in the following definition.

\begin{defn}[An extension of Section 3.7 in \cite{gf}]\label{defconstructible}
We consider the following set of commuting reduction steps for a graphical function $f_G^\alpha(z)$:
\begin{enumerate}
\item[(R1)] Deletion of edges between external vertices, Section \ref{sectextedge},
\item[(R2)] Product factorization, Section \ref{sectproduct},
\item[(R3)] Integrating out two-point functions, Section \ref{secttwopt},
\item[(R4)] Contraction of an isolated edge with spin $\alpha$ and weight $1-k/\lambda+|\alpha|/2\lambda$, $k=0,1,\ldots,n+1$, attached to any external vertex, Sections \ref{sectpermute} and \ref{sectappendedge}.
\end{enumerate}
A graphical function is {\em irreducible} if it cannot be reduced by any of these steps.
Any two-point insertion in (R3) is a two-point subgraph in $G$. Its Feynman integral equals a sum of propagators with coefficients that are scalar
Feynman integrals and metric tensors, see Section \ref{secttwopt}.
Maximum use of the reduction steps (R1) to (R4) maps a graph $G$ to a set of scalar Feynman integrals and irreducible graphical functions.
The {\em kernel} of $G$ is this unique set of scalar Feynman integrals and irreducible graphical functions.

We recursively define constructible graphical functions and scalar Feynman integrals by
\begin{enumerate}
\item A graphical function is constructible if, after using (R3), it is logarithmically divergent and its kernel is the empty graphical function plus constructible scalar Feynman integrals.
\item A scalar Feynman integral $P_G$ is constructible if it has two vertices or if there exist three vertices $a,b,c$ in $G$ such that the graphical function $G|_{abc=01z}$ is constructible.
\end{enumerate}
\end{defn}
See Figure \ref{fig:GFreduction} for the construction of the cat eye two-point graph.

In \cite{gfe} constructible graphical functions were defined for completed graphs (see \cite{gf}). Completion (adding a fourth external vertex $\infty$ using conformal
invariance) is only possible for theories whose fermionic interaction is of Yukawa type,
see Section \ref{sectYukawa}.
In such theories the vertices are scalar, i.e.\ they have no vector or spin index. In particular, QED and Yang-Mills theories are not amenable to completion.

\section{Primitive scalar Feynman integrals in Yukawa-$\phi^4$ theory}\label{sectYukawa}
In this section we apply the results of the article to four-dimensional Yukawa-$\phi^4$ theory which has a spin
0 boson with a four-point $\phi^4$ interaction (the Higgs in the standard model) and a spin $1/2$ fermion with a three-point Yukawa coupling to the boson, see Figure \ref{fig:yukawaphi4}.
\begin{figure}
\centering
\includegraphics{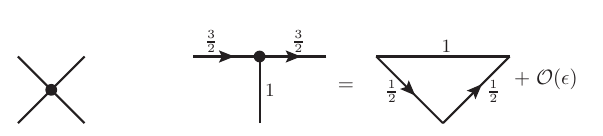}
\caption{The interactions in Yukawa-$\phi^4$ theory (the Gross-Neveu-Yukawa model). The three-point Yukawa interaction is rational in four dimensions (uniqueness)}
\label{fig:yukawaphi4}
\end{figure}
In physics, this Yukawa-$\phi^4$ theory has the name Gross-Neveu-Yukawa model.
The beta functions of the interactions have been calculated with classical momentum space IBP methods to loop order four \cite{GNY}.

As a consequence of conformal symmetry, the Yukawa vertex is rational in position space for $\epsilon=0$.
This results in the star-triangle ($Y-\Delta$) identity that is depicted in Figure \ref{fig:yukawaphi4}.
Such identities are called uniqueness relations in physics \cite{Kunique}. Note that the
right-hand side of the star-triangle identity is not a subgraph of Yukawa-$\phi^4$ theory. The star-triangle
identity operates on a more general set of Feynman graphs that also has spin $1/2$ particles of weight $1/2$.
In the following, we work in this set of generalized Feynman graphs.

The star-triangle identity preserves the total weights on the external vertices in four dimensions. Every vertex has weighted degree four.
This implies that the fermion propagator of weight $1+1/2\lambda=3/2+O(\epsilon)$ on the left-hand side
changes to weight
$1-1/2\lambda=1/2+O(\epsilon)$ while the boson weight changes from $1$ to $1/\lambda=1+O(\epsilon)$.
(The full formula, which includes all terms of $O(\epsilon)$, has three more graphs on the right-hand side.)
In position space, each application of the star-triangle identity eliminates one integration.
Hence, we want to use the star-triangle identity as often as possible. Because it is not clear
in which sequence and for which vertices we should use the star-triangle identity to get maximum
reduction, we apply the identity in every possible way. However, we do not use it from right to left
(as a triangle-star identity) which, in certain situations, may be necessary to obtain maximum reductions.
A similar situation exists in $\phi^3$ theory where the graph-theoretic structure of star-triangle reductions
was analyzed in \cite{JYDY}.

In this section, we only skim the results for primitive graphs (graphs without subdivergence).
The amplitudes of primitive graphs have a pole in $\epsilon$ of order one whose residue is given as a finite
integral in four dimensions. This primitive (scalar) Feynman integral (Feynman period or Feynman residue) is universal,
it does not depend on the renormalization scheme. The primitive Feynman integral contributes to the corresponding beta function.
The highest transcendental weight part of the beta function at a given loop order only comes from primitive graphs.

Primitive Feynman integrals can be calculated in exactly four dimensions ($\epsilon=0$).
In the presence of conformal symmetry this simplifies their computation. This compensates
somewhat for the intrinsic difficulty of primitive graphs. Still, the calculation of primitive Feynman integrals
is often the hardest step in completing a loop order. In $\phi^4$ theory and in six-dimensional $\phi^3$ theory,
e.g., it was possible to calculate the beta function to the highest loop orders in which all primitive Feynman integrals are known
($7$ loops and $6$ loops, respectively \cite{7loops,6lphi3}).

There exists a long history of calculating primitive Feynman integrals in $\phi^4$ theory \cite{BK,Census}.
Mathematically, primitive Feynman integrals are particularly interesting.
They are conjectured to be a comodule under the motivic Galois coaction \cite{coaction,Bcoact1,Bcoact2},
which led to the discovery of motivic Galois structures in QFTs (the cosmic Galois group).

The purpose of this section is to show that the concept is consistent and efficient in a non-trivial
physical QFT. We chose Yukawa-$\phi^4$ theory because, due to conformal symmetry, one has a large class of constructible completed graphs (see Section \ref{completion}). We hence get more results
in higher loop orders. The complete theory of graphical functions in primitive Yukawa-$\phi^4$
theory is quite intricate and will be presented in a separate article.

\subsection{Inversion}
We consider the following inversion on $x\in\RR^D$,
\begin{equation}
x\mapsto\tilde x=\frac{x}{x^2},
\end{equation}
which keeps the direction of $x$ but reverses its length, $||x||\mapsto1/||x||$.
For any weight $\nu\in\RR$ this leads to the identities
\begin{equation}\label{spin0inv}
\frac1{||\tilde x-\tilde y||^{2\lambda\nu}}\mapsto||x||^{2\lambda\nu}\frac1{||x-y||^{2\lambda\nu}}\,||y||^{2\lambda\nu}
\end{equation}
for spin 0 and
\begin{equation}\label{spin12inv}
\frac{\tilde x\!\!\!/-\tilde y\!\!\!/}{||\tilde x-\tilde y||^{2\lambda\nu+1}}\mapsto-\hat x\!\!\!/\,||x||^{2\lambda\nu}\frac{x\!\!\!/-y\!\!\!/}{||x-y||^{2\lambda\nu+1}}\,||y||^{2\lambda\nu}\hat y\!\!\!/
\end{equation}
for spin $1/2$. Here, $\hat x=x/||x||$ is a unit vector, so that $\hat x\!\!\!/^2=\hat y\!\!\!/^2=1$.
If the fermion vertex does not carry a $\gamma$-matrix (like in Yukawa theory but unlike QED), then the left and right factors $\hat x\!\!\!/$ and $\hat y\!\!\!/$
square to 1 in a fermion line and the fermion behaves (up to boundary terms) like a signed boson line. In such theories we can use inversion to
simplify the calculation of Feynman integrals.

\subsection{Completion}\label{completion}
\begin{figure}
\begin{center}
\includegraphics[scale=1.11]{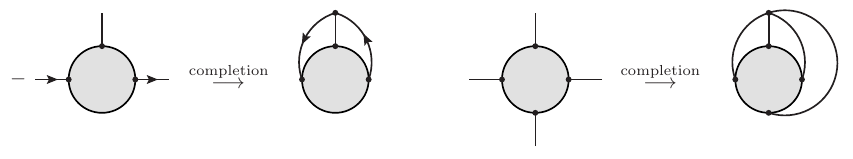}
\end{center}
\caption{The completions of primitive three-point and four-point graphs.}
\label{fig:Y4completion}
\end{figure}

Inversion was already used by D. Broadhurst and D. Kreimer in \cite{BK} to obtain identities between primitive Feynman integrals in $\phi^4$ theory.
This method was systematized in \cite{Census}. It was shown that all graphs that, by inversion, have the same Feynman integral form an equivalence class
which can graphically be represented by a single `completed' graph. In the completed graph the external legs of the vertex graphs are
connected to an extra vertex $\infty$. After completion we forget the label $\infty$ and all vertices are of the same type.
Completed graphs are $4$-regular in $\phi^4$ theory and $3$-regular in $\phi^3$ theory. Whatever vertex one chooses to open in a completed graph, one gets primitive graphs
with the same Feynman integral.

The benefit of completion is that the number of completed graphs is significantly smaller than the number of primitive graphs.
Moreover (and mostly), it solves the calculation of the Feynman integrals for all decompletions if it is possible to calculate one decompletion.

In Yukawa-$\phi^4$ theory, completed graphs have two types of vertices: a three-valent Yukawa vertex and a four-valent $\phi^4$ vertex. Decompletion at a Yukawa vertex gives a minus sign
(reminiscent of the minus sign in (\ref{spin12inv})), while decompletion at a $\phi^4$ vertex gives a plus sign, see Figure \ref{fig:Y4completion}.

The completion of a primitive three-point graph may be identical to the completion of a primitive four-point graph.
However, the contribution to the four-point beta function is one loop order lower than the contribution to the three-point beta function because the number of
loops in the completed graphs goes down by three upon decompleting at a $\phi^4$ vertex while it goes down by two upon decompleting at a Yukawa vertex.
So, the primitive contributions to the two beta functions (Yukawa and $\phi^4$) are connected over different loop orders via completion.

We denote the completion of a graph $G$ by $\overline{G}$,
\begin{equation}
G\quad\stackrel{\rm completion}{\longrightarrow}\quad\overline{G},
\end{equation}
and obtain for the amplitude of a primitive graph $G$
\begin{equation}
A_G(x)=\frac{P_{\overline{G}}}{\epsilon}+O(1),
\end{equation}
where $P_{\overline G}$ is the Feynman integral of any (and hence all) of the decompletions of $\overline G$.

Note that, although the completed graph $\overline{G}$ looks like vacuum graph, it is an equivalence class of
primitive graphs. The calculation of $P_{\overline{G}}$ always implies to choose a vertex $\infty$ first.
In fact, we choose three vertices $0,1,\infty$ and calculate the Feynman integral of $\overline{G}\setminus\infty$ with external vertices $0$ and $1=\hat z_1$, see Section \ref{sectper}.

To calculate $P_{\overline G}$ with graphical functions, we choose a fourth external vertex $z$ corresponding to the vector $\hat z_2=z_2/\|z_1\|$ and calculate the graphical function of the graph. Thereafter, we integrate over $z$, see Section \ref{construct}.
We use the powerful theory of completed graphical functions, see \cite{gf,gfe} in the scalar case.
In the presence of spin, the theory of completed graphical functions is significantly more intricate than the theory of completed scalar Feynman integrals.
In the theory of completed graphical functions it is important to keep fermions as spin $1/2$ particles
and not to evaluate the traces over $\gamma$ matrices in the first step. An early evaluation of the $\gamma$
traces transforms the Feynman integral into a large sum of Feynman integrals with spin $1$ propagators.
Because spin $1$ propagators do not behave well under inversion (in contrast to spin zero and spin $1/2$,
see (\ref{spin0inv}) and (\ref{spin12inv})), most of the terms in the sum are much more complicated than the original Feynman integral.

It can be proved that a completed graph $\overline{G}$ has finite Feynman integral $P_{\overline{G}}$ if
and only if any edge cut of total weight $\leq4$ separates off a vertex ($\overline{G}$ is internally weight
$5$ connected), see Figures \ref{fig:smallex}, \ref{fig:largeex}. Physically, this means that $\overline{G}$ has no non-trivial three-point or four-point insertions.

\subsection{Some identities}
\begin{figure}
\begin{center}
\includegraphics[scale=1]{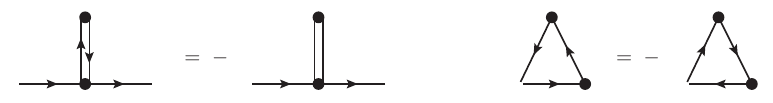}
\end{center}
\caption{Spin identities for a double line and for a triangle.}
\label{fig:Y4ids1}
\end{figure}
Yukawa-$\phi^4$ theory has a rich structure of identities. Beyond the star-triangle identity
in Figure \ref{fig:yukawaphi4} we have the identities in Figure \ref{fig:Y4ids1}. These identities
act solely on the spin. The weights of the propagators are unchanged. Choosing suitable coordinates,
they are graphical versions of the elementary equations
\begin{equation}
x\!\!\!/(-x\!\!\!/)=-x^2,\quad x\!\!\!/(y\!\!\!/-x\!\!\!/)(-y\!\!\!/)=-y\!\!\!/(x\!\!\!/-y\!\!\!/)(-x\!\!\!/).
\end{equation}
The last equation is true because both sides equal $x^2y\!\!\!/-y^2x\!\!\!/$. Note that the latter identity renders a Feynman integral zero if the graph is symmetric upon
interchanging the two fat vertices on the right hand side of Figure \ref{fig:Y4ids1}.

\begin{figure}
\begin{center}
\includegraphics[scale=0.98]{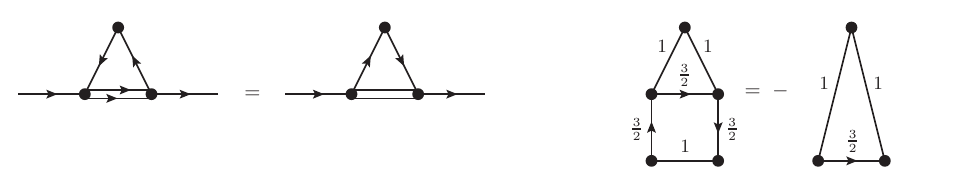}
\end{center}
\caption{Reductions obtained from the identities in Figure \ref{fig:Y4ids1} and from the star-triangle identity
in Figure \ref{fig:yukawaphi4}.}
\label{fig:Y4ids2}
\end{figure}
Together with the star-triangle identity, these two identities give rise to the identities in
Figure \ref{fig:Y4ids2}.
On the left-hand side, a triangle loop may be reduced to a triangle without fermion loop.
The edges may have any weights. The identity on the right-hand side is a reduction inside
Yukawa-$\phi^4$ theory. Both sides of the equation can be considered as subgraphs
of (completed) Yukawa-$\phi^4$ graphs.

\begin{figure}
\begin{center}
\includegraphics[scale=0.98]{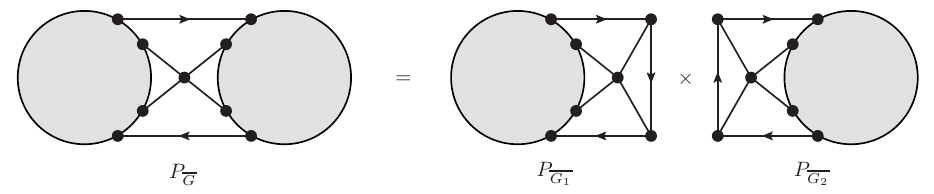}
\end{center}
\caption{The factor identity in Yukawa-$\phi^4$ theory.}
\label{fig:Y4factor}
\end{figure}
The Feynman integral of a completed primitive graph factorizes if it has a three vertex split
(label the split vertices $0,1,\infty$ and the Feynman integral factorizes).
The only (convergent) configuration in Yukawa-$\phi^4$ theory is depicted in Figure \ref{fig:Y4factor}
where all fermion edges have weight $3/2$ and all boson edges have weight $1$,
\begin{equation}
P_{\overline G}=P_{\overline{G_1}}\,P_{\overline{G_1}}.
\end{equation}
More factorizations are possible in the generalized set of Feynman graphs which include fermion
edges of weight $1/2$.

\subsection{Small graphs}
\begin{figure}
\centering
\includegraphics[scale=0.95]{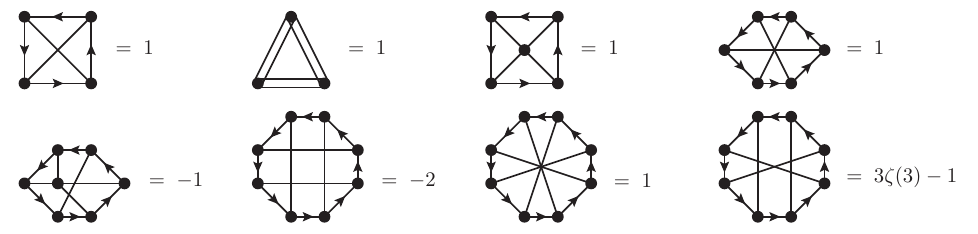}
\caption{Completed graphs in Yukawa-$\phi^4$ theory up to five loops. The Feynman integrals
contribute to the Yukawa beta function up to three loops and to the $\phi^4$ beta function up to two loops.}
\label{fig:smallex}
\end{figure}

In Figure \ref{fig:smallex} we list all completed graphs with $\leq 5$ loops in Yukawa-$\phi^4$ theory which
have a finite Feynman integral. Note that some of the graphs have $P_G=1$ by successive use of the star-triangle identity, see Figure \ref{fig:yukawaphi4}.

The only graph with non-rational Feynman integral contributes to the three-loop beta function of the Yukawa interaction.
\subsection{An eight loop example}\label{sect:eight}
\begin{figure}
\centering
\includegraphics[scale=0.6]{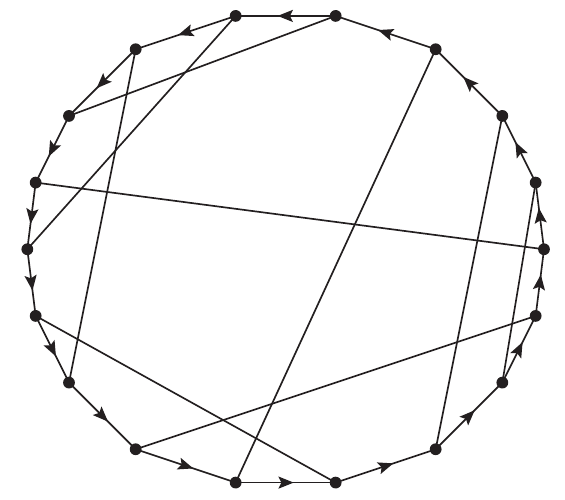}
\caption{A completed graph $\overline{G}$ whose Feanman integral gives a contribution to the Yukawa beta function at eight loops.}
\label{fig:largeex}
\end{figure}

Consider the completed graph $\overline{G}$ depicted in Figure \ref{fig:largeex}.
A computation of 15 minutes with {\tt HyperlogProcedures} \cite{Shlog} on a single core of an office PC consuming 2GB of RAM gives the result
\begin{align}\label{P8}
P_{\overline G}&=\frac{15}2\zeta(5)-39\zeta(3)^2+\frac{175}8\zeta(7)-143\zeta(3)\zeta(5)\\\nonumber
&-\,44\zeta(3)^3+\frac{16745}{36}\zeta(9)+332\zeta(3)\zeta(7)+165\zeta(5)^2\\\nonumber
&+\,\frac3{10}\pi^6\zeta(5)-\frac{189}{50}\pi^4\zeta(7)-\frac{1701}2\pi^2\zeta(9)-\frac{567}5\zeta(5,3,3)\\\nonumber
&-\,\frac52\zeta(3)^2\zeta(5)+\frac{160377}{20}\zeta(11)\\\nonumber
&-\,14\zeta(3)^4+\frac{18985}{72}\zeta(3)\zeta(9)-\frac{4485}8\zeta(5)\zeta(7)\\\nonumber
&=2.7071150367306270291466\ldots.
\end{align}
To the author's knowledge it is impossible to do this calculation (even numerically) with any other method. The number $P_{\overline G}$ is
contained in the $\QQ$-span of Feynman integrals in pure $\phi^4$ theory.

\subsection{The status of Yukawa-$\phi^4$ theory}
For a given completed graph $\overline{G}$ there exist many different ways (thousands if $\overline{G}$ has seven loops) to
choose the external vertices $0,1,z,\infty$ such that the Feynman integral can be calculated with
graphical functions. All results agree, which establishes a strong test for the correctness
of the implementation.

Up to four loops in the primitive graphs (six or seven loops in the completed graphs) all primitive Feynman integrals could
be calculated. By the time of writing, at five loops the evaluations of eleven out of 705 graphs are missing.
At higher loop orders the gaps grow rapidly. We calculated individual graphs up to eight loops in the
beta function of the Yukawa vertex, corresponding to 10 loops
in the completed graph, see Section \ref{sect:eight}.

The number content of the Feynman integrals that could be calculated is identical to $\phi^4$ theory.
In particular, the coaction conjectures also seem to hold in the presence of a Yukawa interaction
\cite{coaction}. However, until now the selection of graphs whose Feynman integral has been calculated
is biased towards simple graphs with not very rich number content.
Still, this result establishes another test for the validity of the method, because it is unlikely
that an incorrect calculation does not show a more generic number content than $\phi^4$ theory.

It is worth noting that Feynman integrals of graphs with fermion edges are more prone to weight mixing than Feynman integrals of graphs in pure $\phi^4$ theory, see, e.g.,
the last graph in Figure \ref{fig:smallex}.

In future work, one can use identities to relate unknown Feynman integrals to (sums of) known Feynman integrals.
Alternatively (ideally: additionally) one can calculate kernel graphical functions using identities
or by mapping them to scalar graphical functions. These can either be calculated by the existing
version of {\tt HyperlogProcedures} \cite{Shlog} or one can try to solve them with IBP methods.
A combined method that acts on Feynman integrals and graphical functions has been used successfully in six-dimensional $\phi^3$ theory \cite{recursive}.

We plan to compute the five loop contribution to the beta functions of Yukawa-$\phi^4$ theory in the near future.

\end{document}